\title{Customizable Contraction Hierarchies -- A Survey}
\author{Thomas Bläsius}{Karlsruhe Institute of Technology, Germany}{}{}{}
\author{Valentin Buchhold}{Karlsruhe Institute of Technology, Germany}{}{}{}
\author{Dorothea Wagner}{Karlsruhe Institute of Technology, Germany}{}{}{}
\author{Tim Zeitz}{Karlsruhe Institute of Technology, Germany}{}{}{}
\author{Michael Zündorf}{Karlsruhe Institute of Technology, Germany}{}{}{}
\authorrunning{T. Bläsius, V. Buchhold, D. Wagner, T. Zeitz, and M. Zündorf}
\keywords{realistic road networks, customizable contraction hierarchies, route planning, shortest paths}
\begin{document}

\maketitle

\begin{abstract}
This work establishes the technical fundamentals of a well-tuned Customizable Contraction Hierarchies (CCH) implementation that is simple and elegant.
We give a detailed overview of the state of the art of CCH, review recent advances on CCH and show how to combine them.
Additionally, we propose further refinements that improve the performance of CCH.
An extensive evaluation confirms that a CCH framework is not only comprehensive in supported features but also competitive in performance to both Contraction Hierarchies (CH) and Customizable Route Planning (CRP).
\end{abstract}

\section{Introduction}
A routing framework for modern map and navigation applications requires far more than an effective speedup technique for the quick computation of shortest paths~\cite{DBLP:journals/transci/DellingGPW17}.
While the shortest path computation should not be the bottleneck of the routing engine, there are many additional requirements.
A practical algorithm must handle not only car travel times but also other length functions (metrics) such as walking, biking, or user-defined preferences, for example, to avoid highways.
Further, \emph{dynamic} routing considering real-time traffic is critical.
Therefore, updates to the length function must be fast enough for frequent adjustments.
Integrating turn costs and restrictions is also an essential requirement for high-quality routing.
Finally, point-of-interest queries are also an important feature in such applications.

The first speedup technique that supports this entirety of requirements is \emph{Customizable Route Planning} (CRP) sometimes also referred to as Multi Level Dijkstra (MLD)~\cite{DBLP:journals/transci/DellingGPW17,DBLP:journals/tkde/DellingW15}.
This technique is also the first to take advantage of a three-phase approach where the classical preprocessing phase is split into an \emph{metric-independent phase} and a \emph{customization phase}.
The motivation behind this is that topological updates happen much less frequently than metric updates and thus only one part of the preprocessing has to be redone frequently.

Inspired by CRP, \emph{Contraction Hierarchies} (CH) have also been extended to a three-phase setup~\cite{DBLP:journals/jea/DibbeltSW16}.
This variant is called Customizable Contraction Hierarchies~(CCH).
Classical CH uses heuristic ad-hoc methods to build ``good enough'' axillary data in reasonable preprocessing running times.
These methods only work well for well-behaved weight functions with a strong hierarchy.
In contrast, CCH draws on nested dissections and algorithms from chordal graph theory for the augmented graph construction.
This yields surprisingly elegant and simple algorithms that also have stronger theoretical guarantees.
CCH queries are roughly as fast as CH queries, i.e., around an order of magnitude faster than CRP.
On the downside, the CCH customization is slower than the CRP customization thus far.
Also, the CCH proposed by Dibbelt et al.\ does not support all the features mentioned above.
However, CCH is an area of active research and in the past few years, there have been several research articles~\cite{DBLP:journals/jea/Buchhold0W19,DBLP:conf/wea/BuchholdW21,DBLP:conf/atmos/BuchholdWZZ20,DBLP:journals/algorithms/GottesburenHUW19} improving CCH and pushing it closer to feature parity with CRP.

\subparagraph{Contribution and Overview.}

In this work, we give a detailed overview of the state of the art of CCHs.
We review recent advances, show how to combine them, propose additional refinements, and provide a state-of-the-art reference implementation.
While this involves introducing, explaining, and discussing various concepts and details, we also strive to provide an accessible introduction to CCHs that is understandable without prior knowledge in route planning.
To achieve this, we start with a description on a high level of abstraction in Section~\ref{sec:cchpp:cch}.
This description is complete in the sense that it fully explains how and why CCH works, while leaving some freedom of how to exactly implement it.
In Section~\ref{sec:impl-deta} we discuss in detail the different ways of implementing the various aspects of a CCH, with a focus on the state of the art.
This is also the section where we propose several improvements to what has been done in the literature before.
This includes a simpler and more efficient algorithm for the metric-independent preprocessing in Section~\ref{sec:cchpp:metric_independent} and a better parallelization of the customization in Section~\ref{sec:cchpp:customization} (basic as well as perfect customization).
In Section~\ref{sec:cchpp:ext_queries}, we discuss the extended scenarios of point-of-interest queries as well as turn costs.
Moreover, we provide an adaptation of the Lazy RPHAST algorithm for one-to-many queries from CHs to CCHs, which in particular accelerates point-of-interest queries.
Finally, in Section~\ref{sec:cchpp:eval}, we evaluate our reference implementation.
Besides showing that our improvements are actually improvements, we also make a comparison to related techniques that are not the subject of this paper.

\section{Customizable Contraction Hierarchies}\label{sec:cchpp:cch}

At its core, the CCH framework has a preprocessing phase that augments the input graph with additional edges, which then speed-up subsequent shortest-path queries.
In the following, we give a concise introduction into how and why this works.
We assume that the input consists of an undirected graph $\aGraph = (\vertices, \edges)$ together with a \emph{weight function} $\len \colon \vertices \times \vertices \to \nonNegativeReals \cup \{\infty\}$ with $\len(u, v) = \infty$ for $\{u, v\} \notin E$.
Note that the graph itself is undirected while the weight function is directed.
Thus, for an edge $\{\uVertex, \vVertex\} \in E$, the weights $\len(\uVertex, \vVertex)$ and $\len(\vVertex, \uVertex)$ can be different and $\len(\uVertex, \vVertex)$ represents the cost to traverse the edge $\{\uVertex, \vVertex\}$ from $\uVertex$ to $\vVertex$.
This definition allows us to work with the graph's topology independent of directions but still allows asymmetric costs as well as one-way streets (using $\infty$ as weight for one direction).
The length of a path $\langle \vVertex_0, \dots, \vVertex_k\rangle$ is $\sum_{i = 1}^k \len(\vVertex_{i - 1}, \vVertex_i)$ and the distance from one vertex to another is the length of the shortest path between them.

The CCH preprocessing starts by assigning every vertex a unique \emph{rank} and constructing the \emph{augmented graph} $\augGraph = (\vertices, \augEdges)$ by contracting the vertices in \emph{rank order}, i.e., by increasing ranks.
For vertices $\uVertex, \vVertex \in \vertices$, we also write $\uVertex \prec \vVertex$ if the rank of $\uVertex$ is smaller than the rank of $\vVertex$.
To \emph{contract} $\vVertex \in \vertices$, complete the neighbors of $\vVertex$ with higher rank to a clique; see Figure~\ref{fig:cchpp:completion} for an example\footnote{We note that \augGraph is chordal and that the rank order is a perfect elimination scheme.}.
The edges created this way are called \emph{shortcuts}.
The idea is to perform shortest-path queries on \augGraph instead of \aGraph.
To take the new shortcut edges into account, we also define a modified weight function $\augLen \colon \vertices\times\vertices \to \nonNegativeReals \cup \{\infty\}$ for \augGraph.
The following observation provides the core insight that makes CCH work and shows us how \augLen should be defined.
We note that the discussion below the observation makes the somewhat technical observation easier to digest.

\begin{figure}[t]
	\centering%
	\newcommand{\completionScale}{0.7cm}
\begin{tikzpicture}
  [x=\completionScale,
   y=\completionScale,
   junction/.style={fill=KITblack!5,draw, circle, inner sep=0pt, minimum size=0.6cm},
   edge/.style={>=stealth, shorten >=1pt, shorten <=1pt},
   original edge/.style={edge, draw=black},
   faded/.style={KITblack!50},]

  \node [junction] (v1) at (1,0) {$\vVertex$};
  \node [junction] (w11) at (-2,1) {$\wVertex_1$};
  \node [junction] (w12) at (-1,2) {$\wVertex_2$};
  \node [junction] (w13) at (0,3) {$\wVertex_3$};
  \node [junction] (w14) at (1,4) {$\wVertex_4$};    
  \node [faded,junction] (u12) at (-0.5,-1) {$\uVertex_2$};
  \node [faded,junction] (u11) at (2,-1.5) {$\uVertex_1$};  
  
  \draw (v1) edge [original edge] (w11);
  \draw (v1) edge [original edge] (w12);
  \draw (v1) edge [original edge] (w13);
  \draw (v1) edge [original edge] (w14);

  \draw (w11) edge [original edge,bend left=60,looseness=1.2] (w14);
  
  \draw (w12) edge [original edge,bend left=40] (w13);
  \draw (w12) edge [original edge,bend left=50,looseness=1.1] (w14);
  
  
  \draw (u12) edge [original edge,faded] (v1);
  \draw (u12) edge [original edge,faded] (w11);
  \draw (u11) edge [original edge,faded] (v1);
  
  \begin{pgfonlayer}{background}
  	\draw[dashed] plot[smooth cycle, tension=1.3] coordinates{(-2,0) (2,4) (-2.14,4.14)};
  \end{pgfonlayer}
  
  \node [junction] (v2) at (1+9,0) {$\vVertex$};
  \node [junction] (w21) at (-2+9,1) {$\wVertex_1$};
  \node [junction] (w22) at (-1+9,2) {$\wVertex_2$};
  \node [junction] (w23) at (0+9,3) {$\wVertex_3$};
  \node [junction] (w24) at (1+9,4) {$\wVertex_4$}; 
  \node [faded,junction] (u22) at (-0.5+9,-1) {$\uVertex_2$};
  \node [faded,junction] (u21) at (2+9,-1.5) {$\uVertex_1$};   
  
  \draw (v2) edge [original edge] (w21);
  \draw (v2) edge [original edge] (w22);
  \draw (v2) edge [original edge] (w23);
  \draw (v2) edge [original edge] (w24);
  
  \draw (w21) edge [original edge,bend left=40] (w22);
  \draw (w21) edge [original edge,bend left=50,looseness=1.1] (w23);
  \draw (w21) edge [original edge,bend left=60,looseness=1.2] (w24);
  
  \draw (w22) edge [original edge,bend left=40] (w23);
  \draw[line width=0.09cm] (w22) edge [original edge,white,bend left=50,looseness=1.1] (w24);
  \draw (w22) edge [original edge,bend left=50,looseness=1.1] (w24);
  
  \draw (w23) edge [original edge,bend left=40] (w24);
  
  \draw (u22) edge [original edge,faded] (v2);
  \draw (u22) edge [original edge,faded] (w21);
  \draw (u21) edge [original edge,faded] (v2);
  
  \begin{pgfonlayer}{background}
  	\draw[dashed] plot[smooth cycle, tension=1.3] coordinates{(-2+9,0) (2+9,4) (-2.14+9,4.14)};
  \end{pgfonlayer}
  
  \draw[->, original edge] (-4.5,-1.8) -- node[anchor=south,sloped] {rank} (-4.5,5);
\end{tikzpicture}%
	\caption{Visualization of the metric-independent preprocessing.
		Left: A graph \aGraph with seven vertices and an ranks implied by the $y$-coordinate of the vertices after $\uVertex_1$ and $\uVertex_2$ were processed.
		Right: The same graph after \vVertex was also processed.
		Note that only the neighbors of \vVertex that were not yet processed get completed into a clique.
	}\label{fig:cchpp:completion}
\end{figure}
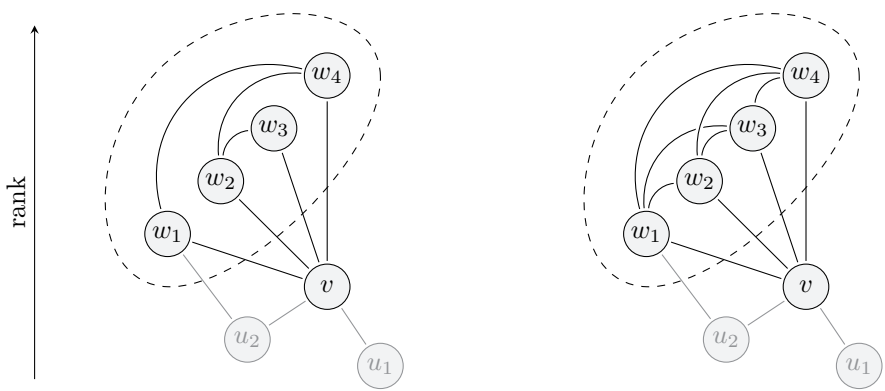

\begin{observation}
  \label{obs:core-observation}
  Let $\augLen$ be the weight function of $\augGraph$ such that for every edge $\{\vVertex, \wVertex\} \in \augEdges$,
  \begin{enumerate}
  \item \label{item:augmented-len-upper-bound}$\augLen(\vVertex, \wVertex)$ is at least the distance from $\vVertex$ to $\wVertex$ in $\aGraph$,
  \item \label{item:non-shortcut-len-lower-bound} the weights $\augLen$ are not larger than the original weights $\len$, i.e., $\augLen(\vVertex, \wVertex) \le \len(\vVertex, \wVertex)$, and
  \item \label{item:augmented-len-lower-bound}if $\{\uVertex, \vVertex\}, \{\uVertex, \wVertex\} \in \augEdges$ with $\uVertex \prec \vVertex$ and $\uVertex \prec \wVertex$, then $\augLen(\vVertex, \wVertex) \le \augLen(\vVertex, \uVertex) + \augLen(\uVertex, \wVertex)$.
  \end{enumerate}
  Then for every $\aSource, \aTarget \in \vertices$, and every shortest \aSource-\aTarget-path \aPath in \augGraph, one can obtain a shortest \aSource-\aTarget-path in \aGraph of the same length by replacing edges in \aPath with paths in \aGraph.
  Moreover, for every $\aSource, \aTarget \in \vertices$, there exists a shortest \aSource-\aTarget-path $\langle \vVertex_0, \dots, \vVertex_k \rangle$ with $\vVertex_0 = \aSource$ and $\vVertex_k = \aTarget$ in \augGraph such that $\vVertex_0 \prec \dots \prec \vVertex_i$ and $\vVertex_i \succ \dots \succ \vVertex_k$.
\end{observation}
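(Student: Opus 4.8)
The plan is to first show that the $\augLen$-distance in $\augGraph$ agrees with the distance in $\aGraph$, and then read off both halves of the statement from this fact together with the three listed properties. Write $D(\aSource,\aTarget)$ for the distance from $\aSource$ to $\aTarget$ in $\aGraph$. I claim the $\augLen$-distance from $\aSource$ to $\aTarget$ in $\augGraph$ equals $D(\aSource,\aTarget)$. For ``$\le$'', the original edges survive the contraction (contraction only inserts shortcuts), so any shortest $\aSource$-$\aTarget$-path in $\aGraph$ is also a walk in $\augGraph$, and by Property~\ref{item:non-shortcut-len-lower-bound} its $\augLen$-length is at most its $\len$-length $D(\aSource,\aTarget)$. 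For ``$\ge$'', Property~\ref{item:augmented-len-upper-bound} gives $\augLen(\vVertex,\wVertex)\ge D(\vVertex,\wVertex)$ for every edge of $\augGraph$; summing along an arbitrary $\augGraph$-walk and applying the triangle inequality for $D$ shows its $\augLen$-length is at least $D(\aSource,\aTarget)$.

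The first half of the statement (unpacking) then follows almost immediately. Given a shortest $\aSource$-$\aTarget$-path $\aPath$ in $\augGraph$, replace each edge $\{\vVertex_{i-1},\vVertex_i\}$ by a shortest $\vVertex_{i-1}$-$\vVertex_i$-path in $\aGraph$, whose length is $D(\vVertex_{i-1},\vVertex_i)\le\augLen(\vVertex_{i-1},\vVertex_i)$ by Property~\ref{item:augmented-len-upper-bound}. The resulting $\aSource$-$\aTarget$-walk in $\aGraph$ has total length at most the $\augLen$-length of $\aPath$, which equals $D(\aSource,\aTarget)$ by the distance equality. Since every $\aSource$-$\aTarget$-walk has length at least $D(\aSource,\aTarget)$, both the unpacked walk and $\aPath$ have length exactly $D(\aSource,\aTarget)$, so the unpacked walk is a shortest $\aSource$-$\aTarget$-path in $\aGraph$ of the same length as $\aPath$.

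For the second half (existence of an up-down path) I would use an extremal choice. Among all shortest $\aSource$-$\aTarget$-walks in $\augGraph$, pick one $\aPath=\langle\vVertex_0,\dots,\vVertex_k\rangle$ with the fewest edges; such a walk is automatically a simple path. If $\aPath$ were not of the claimed shape, its rank sequence would have an interior local minimum, i.e.\ an index $j$ with $\vVertex_{j-1}\succ\vVertex_j\prec\vVertex_{j+1}$. The key structural fact is that the edge $\{\vVertex_{j-1},\vVertex_{j+1}\}$ must already lie in $\augEdges$: since $\vVertex_j$ has lower rank than both of its neighbors on $\aPath$, the edges $\{\vVertex_j,\vVertex_{j-1}\}$ and $\{\vVertex_j,\vVertex_{j+1}\}$ are present at the moment $\vVertex_j$ is contracted (any edge from $\vVertex_j$ to a higher-ranked vertex is either original or a shortcut inserted when contracting a still-lower-ranked vertex, hence already exists before $\vVertex_j$ is processed), so contracting $\vVertex_j$ completes $\vVertex_{j-1}$ and $\vVertex_{j+1}$ into a clique. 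Now Property~\ref{item:augmented-len-lower-bound} with $\uVertex=\vVertex_j$ yields $\augLen(\vVertex_{j-1},\vVertex_{j+1})\le\augLen(\vVertex_{j-1},\vVertex_j)+\augLen(\vVertex_j,\vVertex_{j+1})$, so replacing the detour $\langle\vVertex_{j-1},\vVertex_j,\vVertex_{j+1}\rangle$ by the single edge $\{\vVertex_{j-1},\vVertex_{j+1}\}$ gives an $\aSource$-$\aTarget$-walk of no greater length but strictly fewer edges, contradicting minimality. Hence $\aPath$ has no interior local minimum; as consecutive ranks along an edge always differ, this is exactly the condition $\vVertex_0\prec\dots\prec\vVertex_i\succ\dots\succ\vVertex_k$ for some apex index $i$.

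I expect the main obstacle to be precisely the structural argument in the last paragraph that the bridging shortcut $\{\vVertex_{j-1},\vVertex_{j+1}\}$ exists, since this is the one place where the definition of the contraction order is used rather than merely the three abstract properties of $\augLen$; one has to argue carefully that edges to higher-ranked neighbors are present at contraction time. Everything else reduces to bookkeeping with subadditivity of distances and the triangle inequality of Property~\ref{item:augmented-len-lower-bound}.
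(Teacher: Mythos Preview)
Your proposal is correct and follows essentially the same line of reasoning as the paper. The paper states this result as an \emph{Observation} and does not give a formal proof; the surrounding discussion simply notes that Properties~\ref{item:augmented-len-upper-bound} and~\ref{item:non-shortcut-len-lower-bound} force distances in $\augGraph$ and $\aGraph$ to coincide (yielding the unpacking claim), and that Property~\ref{item:augmented-len-lower-bound} lets one ``shortcut paths that go at some point down and then up again'' (yielding the up-down path). Your write-up is a faithful and more rigorous elaboration of exactly this sketch, including the extremal fewest-edges argument and the verification that the bridging edge $\{\vVertex_{j-1},\vVertex_{j+1}\}$ lies in $\augEdges$, which the paper leaves implicit.
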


Property~\ref{item:augmented-len-upper-bound} states that the inserted shortcuts are not shorter than the actual distance between the two endpoints, i.e., the distances in \augGraph are not shorter than in \aGraph.
Property~\ref{item:non-shortcut-len-lower-bound} makes sure that the non-shortcut edges in \augGraph are not too long, i.e., the distances in \augGraph are not longer than in \aGraph.
These two properties already imply the first statement in the observation, which asserts that shortest paths in \augGraph can be \emph{unpacked} to obtain shortest paths in \aGraph.
Unpacking here means that edges in \augGraph (in particular the shortcuts) are replaced by the paths they represent in \aGraph.

The second statement of the observation allows us to restrict the search for shortest paths in \augGraph to so-called \emph{up-down paths}, i.e., paths on which the rank first only goes up and then only down.
For this to hold, we need Property~\ref{item:augmented-len-lower-bound}, which is also called the \emph{lower triangle inequality}.
The name refers to the fact that $(\uVertex, \vVertex, \wVertex)$ is a \emph{lower triangle} of the edge $\{\vVertex, \wVertex\}$ in the sense that \uVertex has a lower rank than both endpoints of the edge; see Figure~\ref{fig:cchpp:triangle}.
Property~\ref{item:augmented-len-lower-bound} allows us to shortcut paths that are not up-down paths, i.e., paths that go at some point down and then up again.
With the observation that we only need to look for up-down paths, the query for a pair $\aSource, \aTarget \in \vertices$ can be simplified to search only upwards from \aSource and from \aTarget.
Motivated by this, we define the \emph{upward graph} $\gchu = (\vertices, \echu)$ to be the DAG obtained from $\augGraph$ by orienting each edge from low to high rank.
In the same way define the \emph{downward graph} $\gchd = (\vertices, \echd)$ by orienting all edges from high to low rank.
We denote a directed edge from $u$ to $v$ with $\uv$.

\begin{figure}[t]
	\centering%
	\begin{tikzpicture}
  [junction/.style={fill=KITblack!5,draw, circle, inner sep=0pt, minimum size=0.66cm},
   point/.style={draw, circle, inner sep=0pt, minimum size=1ex},
   edge/.style={>=stealth, shorten >=1pt, shorten <=1pt},
   original edge/.style={edge, draw=black},
   edge label/.style={inner sep=1pt,font=\tiny}]

  \node [junction] (u1) at (0,0) {$\uVertex$};
  \node [junction] (v1) at (-0.75,1.5) {$\vVertex$};
  \node [junction] (w1) at (1,2.5) {$\wVertex$};
  
  \draw (u1) edge [original edge] (v1);
  \draw (u1) edge [original edge] (w1);
  \draw (v1) edge [original edge] (w1);
  
  \node [junction] (u2) at (0+5.5,0) {$\uVertex$};
  \node [junction] (v2) at (-0.75+5.5,1.5) {$\vVertex$};
  \node [junction] (w2) at (1+5.5,2.5) {$\wVertex$};
  
  \draw[sloped,below] (u2) edge [original edge] node[edge label]{$\augLen(v,u)$} (v2);
  \draw[sloped,below] (u2) edge [original edge] node[edge label]{$\augLen(u,w)$} (w2);
  \draw[sloped,above] (v2) edge [original edge] node[edge label]{$\augLen(v,w)$} (w2);
  \draw[edge,->,rounded corners=0.2cm,KITgreen] (v2.-43) -- (u2.92) -- (w2.-132);
  
  \node[junction] (u3) at (0+9,0) {$\uVertex$};
  \node[junction] (v3) at (-0.75+9,1.5) {$\vVertex$};
  \node[junction] (w3) at (1+9,2.5) {$\wVertex$};
  
  \draw[sloped,below] (u3) edge [original edge] node[edge label]{$\augLen(u,v)$} (v3);
  \draw[sloped,below] (u3) edge [original edge] node[edge label]{$\augLen(w,u)$} (w3);
  \draw[sloped,above] (v3) edge [original edge] node[edge label]{$\augLen(w,v)$} (w3);
  \draw[edge,<-,rounded corners=0.2cm,KITgreen] (v3.-43) -- (u3.92) -- (w3.-132);
  
  \draw[->, original edge] (-2,-0.4) -- node[anchor=south,sloped] {rank} (-2,2.9);
\end{tikzpicture}%
	\caption{Left: A triangle $(\uVertex, \vVertex, \wVertex)$ in \augGraph.
		We say that this is a lower triangle of the edge $\{\vVertex, \wVertex\}$ since \uVertex has lower rank than both \vVertex and \wVertex.
		In the same sense we say that it is an intermediate triangle of $\{\uVertex, \wVertex\}$ and an upper triangle of $\{\vVertex, \uVertex\}$.
		Right: visualization of the two inequalities Property~\ref{item:augmented-len-lower-bound} gives us for a single triangle, i.e., $\augLen(\vVertex, \wVertex) \le \augLen(\vVertex, \uVertex) + \augLen(\uVertex, \wVertex)$ and $\augLen(\wVertex, \vVertex) \le \augLen(\wVertex, \uVertex) + \augLen(\uVertex, \vVertex)$.
	}\label{fig:cchpp:triangle}
\end{figure}

\subparagraph{The Three Phases of CCH.}
With the above considerations, we can now describe the three phases of the CCH framework in just a few sentences.
In the \emph{metric-independent preprocessing} one builds an augmented graph \augGraph based on a suitable rank order.
In the \emph{customization}, one computes the weight function \augLen of the augmented graph \augGraph.
In the \emph{query} phase, it then suffices to perform two upward searches in \gchu to find a shortest path.

A more detailed description follows in the next sections but for didactic reasons we will describe the three phases in reversed order starting with the query phase.
We note that we omit some implementation details for now and also skip extended scenarios, which we consider in the subsequent chapters.

\subsection{Queries}
\label{sec:queries}
In this phase we are given a source vertex \aSource and a target vertex \aTarget and we need to find a shortest up-down path between \aSource and \aTarget in \augGraph.
The classical algorithm to solve this is a bidirectional Dijkstra doing a \emph{forward search} from $s$ and a \emph{backward search} from $t$.
More specifically, the forward search from $s$ finds the up-part of the up-down path by running Dijkstra's algorithms on the upward graph $\gchu$ using $\augLen(\uVertex, \vVertex)$ as length for an edge $\uv$.
The backward search from $\aTarget$ finds the down-part of the up-down path.
It also runs on the upward graph $\gchu$ but uses $\augLen(\vVertex, \uVertex)$ as length for $\uv$ to reflect the fact that we are searching upward but are interested in a path oriented downward towards $\aTarget$.

One can simplify and speed up the query further by using the observations that $\gchu$ is a DAG and that the upward neighborhood of each vertex forms a clique~\cite{DBLP:journals/jea/DibbeltSW16}.
In the forward search, Dijkstra's algorithm would usually \emph{settle}\footnote{To settle a vertex $\uVertex$ means that every outgoing edge $\uv$ is checked for weather the path to $\vVertex$ via $\uVertex$ yields a new upper bound on the shortest path to $\vVertex$.} the vertices reachable form $\aSource$ in $\gchu$ in increasing distance from $\aSource$, and a priority queue is used to find the next vertex to settle in each step.
As $\gchu$ is a DAG, settling the vertices reachable form $s$ in any topological order is still correct.
Thus, if $\vVertex_0 \prec \dots \prec \vVertex_k$ are the vertices reachable from $s$ in $\gchu$ ordered by rank, then it is correct to settle them in this order.
Moreover, we can settle the vertices in this order very efficiently by using the following observations. 
As the upward neighborhood of every vertex forms a clique, $\gchu$ contains the edge $\vVertex_{i - 1}\vVertex_{i}$ for every $i \in [1, k]$ and $\vVertex_{i}$ is the upward neighbor of $\vVertex_{i - 1}$ with lowest rank.
Thus, if every vertex has a pointer to its upward neighbor of lowest rank, we just need to follow these pointers to visit all vertices reachable from $s$ ordered by rank.
The union of all of these pointers define the so-called \emph{elimination tree}.\looseness-1

As this clearly also works from $t$, the query boils down to doing a forward and backward search from $s$ and $t$, respectively, where each search just settles the vertices on the paths in the elimination tree from $s$ and $t$ to the root, which does not require any additional dynamic data structure.

\subparagraph{A Side Note.}

We note that the formalization in the beginning of Section~\ref{sec:cchpp:cch} of perceiving the input graph $\aGraph$ as undirected but with a directed weight function may seem strange at first.
To explain this, assume for a moment, that the input graph $\aGraph$ is a directed graph as shown in Figure~\ref{fig:cchpp:directed} (left).
Then one could argue that the edges $\uv$ and $\uw$ should not lead to shortcuts, as one cannot get from $\vVertex$ to $\wVertex$ via $u$ or vice versa.
Thus, not inserting an edge between $v$ and $w$ in the augmentation would still yield the correct result when using Dijkstra's algorithm for the forward and backward search in the query.
When using the elimination tree, however, it is crucial that $w$ is an upward neighbor of $v$.
Figure~\ref{fig:cchpp:directed} (right) highlights the elimination tree.

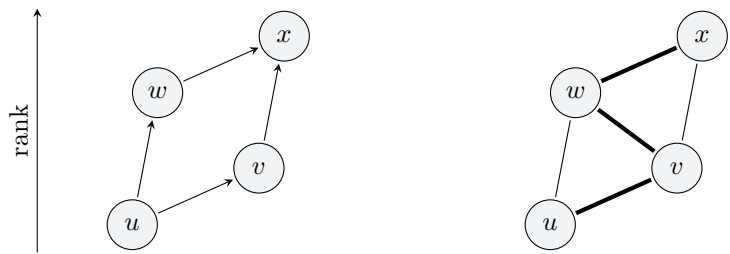
\begin{figure}[t]
	\centering%
	\begin{tikzpicture}
  [junction/.style={fill=KITblack!5,draw, circle, inner sep=0pt, minimum size=0.66cm},
   point/.style={draw, circle, inner sep=0pt, minimum size=1ex},
   edge/.style={>=stealth, shorten >=1pt, shorten <=1pt},
   original edge/.style={edge, draw=black},]

  \node [junction] (u1) at (0,0) {$\uVertex$};
  \node [junction] (v1) at (1.6666,0.75) {$\vVertex$};
  \node [junction] (w1) at (0.3333,1.75) {$\wVertex$};
  \node [junction] (x1) at (2,2.5) {$x$};
  
  \draw (u1) edge [original edge,->] (v1);
  \draw (u1) edge [original edge,->] (w1);
  \draw (v1) edge [original edge,->] (x1);
  \draw (w1) edge [original edge,->] (x1);
  
  \node [junction] (u2) at (0+5.5,0) {$\uVertex$};
  \node [junction] (v2) at (1.6666+5.5,0.75) {$\vVertex$};
  \node [junction] (w2) at (0.3333+5.5,1.75) {$\wVertex$};
  \node [junction] (x2) at (2+5.5,2.5) {$x$};
  
  \draw (u2) edge [original edge, ultra thick] (v2);
  \draw (u2) edge [original edge] (w2);
  \draw (v2) edge [original edge] (x2);
  \draw (w2) edge [original edge, ultra thick] (x2);
  \draw (v2) edge [original edge, ultra thick] (w2);
  
%
%
%
  
  \draw[->, original edge] (-1.25,-0.4) -- node[anchor=south,sloped] {rank} (-1.25,2.9);
\end{tikzpicture}%
	\caption{Left: The directed input graph \aGraph.
		Note that no additional edge would be created by the directed contraction of \uVertex since it has no incoming edges.
		Right: The outcome of the undirected contraction.
		We can see that an addition edge is created and that this edge is needed to form the elimination tree.
	}\label{fig:cchpp:directed}
\end{figure}

\subsection{Customization}\label{ssec:customization}

In the customization phase, we are given the input graph \aGraph with a weight function \len as well as the augmented graph \augGraph.
The goal is to compute the augmented weight function \augLen.
This is done as follows.
\begin{enumerate}
\item The augmented weight function \augLen is initialized with \len.
  This step is called \emph{respecting} and ensures Property~\ref{item:non-shortcut-len-lower-bound} of Observation~\ref{obs:core-observation}.
  After this, the inserted shortcuts have weight $\infty$, which is corrected in the next step.
\item The \emph{basic customization} step is the most critical part and modifies \augLen such that it satisfies all the properties in Observation~\ref{obs:core-observation}.
  For this, all lower triangles $(\uVertex,\vVertex,\wVertex)$ in \augGraph are processed in a bottom-up fashion, i.e., such that all triangles below \uv and \uw are processed earlier.
  For each such lower triangle, we decrease $\augLen(\vVertex, \wVertex)$ or $\augLen(\wVertex, \vVertex)$ if the corresponding paths via \uVertex are shorter.
  More precisely, as shown in Figure~\ref{fig:cchpp:triangle} we set
  \begin{align*}
    \augLen(\vVertex, \wVertex) &= \min\{\augLen(\vVertex, \wVertex), \augLen(\vVertex, \uVertex) + \augLen(\uVertex, \wVertex)\}\text{ and}\\
    \augLen(\wVertex, \vVertex) &= \min\{\augLen(\wVertex, \vVertex), \augLen(\wVertex, \uVertex) + \augLen(\uVertex, \vVertex)\}.
  \end{align*}
\end{enumerate}
After these two steps, \augLen satisfies the three properties of Observation~\ref{obs:core-observation}:
As \augLen is initialized with \len and only updated for edges for which we find shorter paths, $\augLen(\vVertex, \wVertex)$ is at least the distance from \vVertex to \wVertex in \aGraph (Property~\ref{item:augmented-len-upper-bound}).
As lengths are only decreased after the initialization, it clearly also holds that $\augLen(\vVertex, \wVertex) \le \len(\vVertex, \wVertex)$ (Property~\ref{item:non-shortcut-len-lower-bound}).
Moreover, after processing all lower triangles of $\{\vVertex, \wVertex\}$ in the basic customization step, the lower triangle inequality (Property~\ref{item:augmented-len-lower-bound}) must hold.
Due to processing the edges bottom-up, this remains true afterwards.

Although the above two steps are sufficient for correctness due to Observation~\ref{obs:core-observation}, there is some potential for improvement, as some edges might have a length greater than the shortest path between their endpoints.
Obviously, those edges will not be part of a shortest path and are therefore superfluous.
More precisely, any edge \uv for which there exists a shortest path using a vertex higher ranked than \uVertex or \vVertex is superfluous, as shown in~\cite{DBLP:journals/jea/DibbeltSW16}.
We can additionally identify those edges and remove them:
\begin{enumerate}
	\setcounter{enumi}{2}
	\item In the \emph{perfect customization}, triangles $(\uVertex,\vVertex,\wVertex)$ where $\uVertex \prec \vVertex$ and $\vVertex \prec \wVertex$ are processed again but in a top-down order.
	More precisely, we order them by \uVertex.
	
	Similar to lower triangles we will name the triangle $(\uVertex,\vVertex,\wVertex)$ an \emph{upper triangle} of the edge $\{\uVertex,\vVertex\}$ and an \emph{intermediate triangle} of the edge $\{\uVertex,\wVertex\}$.
	Again all triangles -- upper and intermediate -- each yield two inequalities that have to be considered, i.e., we set
	\begin{align*}
		\augLen(\uVertex, \vVertex) &= \min\{\augLen(\uVertex, \vVertex), \augLen(\uVertex, \wVertex) + \augLen(\wVertex, \vVertex)\}\text{,}\\
		\augLen(\vVertex, \uVertex) &= \min\{\augLen(\vVertex, \uVertex), \augLen(\vVertex, \wVertex) + \augLen(\wVertex, \uVertex)\}\text{,}\\
		\augLen(\uVertex, \wVertex) &= \min\{\augLen(\uVertex, \wVertex), \augLen(\uVertex, \vVertex) + \augLen(\vVertex, \wVertex)\}\text{, and}\\
		\augLen(\wVertex, \uVertex) &= \min\{\augLen(\wVertex, \uVertex), \augLen(\wVertex, \vVertex) + \augLen(\vVertex, \wVertex)\}.
	\end{align*}
	The first two inequalities consider the upper triangle of $\{\uVertex, \vVertex\}$ and the detour via \wVertex, whereas the last two inequalities consider the intermediate triangle of $\{\uVertex, \wVertex\}$ and the detour via \vVertex.
	
	Dibbelt et al. have shown that after all these triangles have been relaxed, all edges in the augmented graph have the weight of the shortest distance between their endpoints \cite{DBLP:journals/jea/DibbeltSW16}.
	\item Finally, in the \emph{construction} step, the minimal augmented graph \minAugGraph is constructed by removing any edge whose weight has changed during the perfect customization.
	However, if we use the same graph for forward and backward search we can only remove edges that are superfluous in both directions, i.e., if $\augLen(\wVertex, \vVertex)$ and $\augLen(\vVertex,\wVertex)$ have changed.
	Since most edges are only superfluous for either forward or backward search we build one graph with the relevant edges for each search.
\end{enumerate}
Note that the elimination tree defined in the previous section depends on \augGraph and not on \minAugGraph.
Therefore, the elimination tree should be calculated before any edge is removed, since after step four the elimination tree might no longer be a subgraph of the graph we are working with due to edge deletion.

\subsection{Metric-Independent Preprocessing}
\label{sec:metr-indep-prepr}

As mentioned above, we build the augmented graph \augGraph by contracting the vertices in rank order.
The degree of freedom we have to fill in is the choice of the rank order.
In theory, any order leads to correct queries.
However, good orders are critical for competitive performance.
On the one hand, we do not want to create too many shortcuts.
This would not only result in many triangles and therefore a slow customization but the augmented graph can also quickly become too large, even for the main memory of high-end server machines.
On the other hand, we want that the height of the elimination tree does not become too large as the query performance would suffer from this.

To motivate how the rank order is chosen, we first observe that a vertex separator can be used to split the problem of finding a good order in two independent subproblems.
Specifically, assume that $S \subseteq \vertices$ is a separator whose removal splits the graph \aGraph into at least two components $\aCell_1, \dots, \aCell_k$, which are also called \emph{cells}.
If we choose the rank order such that the separator vertices in $S$ come after all other vertices, then the order chosen for $\vertices \setminus S$ can be optimized independently for each cell $\aCell_i$.
More specifically, the resulting elimination tree will consist of a path for $S$ with a separate subtree attached to the lowest ranked vertex in $S$ for each of the cells $\aCell_i$.
The height of each of these subtrees only depends on the order chosen for the corresponding cell.
Also the set of inserted edges in $\aCell_i$ is independent of all other cells.
Applying this procedure recursively yields a so-called \emph{nested dissection order} \cite{DBLP:journals/tcs/BauerCRW16,g-ndrfe-73}.
Observe that the height of the elimination tree is the size of the separator $S$ plus the height of the highest subtree.
It thus makes sense to choose small balanced separators.
Additionally, choosing small separators also reduces the number of inserted shortcuts, as each separator is completed to a clique (at least if the separators are minimal with respect to inclusion).

One can use existing algorithms to actually compute a nested dissection order.
In Section~\ref{sec:cchpp:metric_independent}, we provide more details on which of the existing solutions to use and how to use it.
We note that there is also a strong connection to graph theory~\cite{DBLP:journals/tcs/BauerCRW16}, since the construction directly corresponds to the so-called \emph{elimination game} which has been extensively studied~\cite{DBLP:journals/dm/Heggernes06,parter1961use,rose1978algorithmic}.
There, the goal is to find a chordal supergraph by inserting few edges, which is done by iteratively removing vertices and completing their neighborhood to a clique.
The order of the vertices is then a perfect elimination scheme of the chordal supergraph.
The concept of the elimination tree, which we take advantage of during the query, has also been introduced in the context of the elimination game \cite{DBLP:journals/toms/Schreiber82,DBLP:journals/toms/Liu86}.

\section{Implementation Details}
\label{sec:impl-deta}

In this chapter we want to dive deeper into CCH.
We provide additional theoretical background with the goal of achieving a simple and efficient implementation.
Moreover, we discuss low-level optimizations that take the underlying hardware into account.

\subsection{Metric-Independent Preprocessing}\label{sec:cchpp:metric_independent}

Recall from Section~\ref{sec:metr-indep-prepr} that we use a nested dissection order to compute the augmented graph \augGraph, as it yields a small search space for the queries.
However, there are potentially many different rank orders that yield the same augmented graph and thus the same search space.
This leaves room for additional performance gains by choosing a different order that improves cache efficiency and jump prediction during the elimination tree query.
This section is structured as follows.
We first discuss which black-box algorithm to use for computing the initial nested dissection order.
Afterwards, we describe how to adapt the order to improve cache efficiency and jump prediction without changing the structure of \augGraph and the associated elimination tree.
For this, it is useful to observe that the nested dissection order itself encodes a separator decomposition that -- under natural assumptions -- is equal to the separator decomposition that the black-box algorithm computed.
Finally, we discuss how to efficiently construct the augmented graph \augGraph based on a given rank order.

\subparagraph{Initial Nested Dissection Order.}
Good nested dissection orders require the computation of small balanced separators, which is an \np-hard problem~\cite{DBLP:journals/tcs/GareyJS76}.
Thus, in general, one cannot expect to obtain small balanced separators efficiently.
Fortunately, road networks are easily separable due to natural features such as rivers or mountain ranges.
InertialFlow~\cite{DBLP:conf/wea/SchildS15} is an example of a simple yet surprisingly effective partitioning algorithm for road networks.
For this algorithm, vertices are projected on a geographic axis (the north-south axis, for example).
The first and last quarter of vertices are contracted, and a minimum cut between these contracted vertices is computed.
The vertices on one side of the cut edges make up a separator candidate.
Repeating this process for different axes and taking the smallest separator yields surprisingly good results.

More sophisticated algorithms for partitioning~\cite{DBLP:conf/alenex/SandersS12,DBLP:conf/wea/SandersS13} road networks~\cite{DBLP:conf/ipps/DellingGRW11,DBLP:journals/jea/HamannS18} have also been developed.
A comprehensive review is beyond the scope of this work.
See~\cite{DBLP:journals/algorithms/GottesburenHUW19} for a recent experimental comparison.
To the best of our knowledge, InertialFlowCutter~(IFC)~\cite{DBLP:journals/algorithms/GottesburenHUW19}, a combination of InertialFlow~\cite{DBLP:conf/wea/SchildS15} and FlowCutter~\cite{DBLP:journals/jea/HamannS18}, currently yields the best results.
We use IFC orders for all experiments in this work.

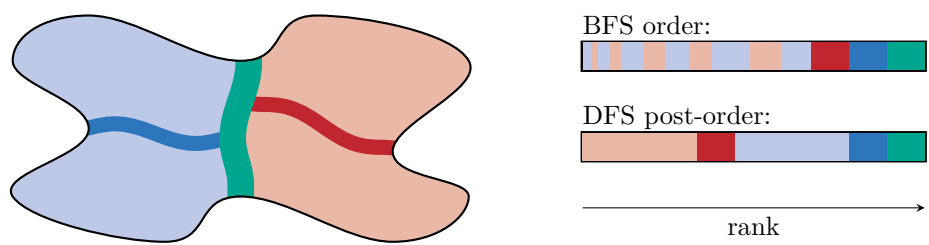
\begin{figure}
	\centering
	\centering
	\begin{tikzpicture}[x=1cm,y=0.6cm,edge/.style={>=stealth},original edge/.style={edge, draw=black},]	
	\begin{scope}[xshift=12.5cm]
		\node[anchor=north west,inner sep=0pt,outer sep=0pt] at (-4,6) {BFS order:};
		\draw[fill,ultra thick] (-4,5.4) rectangle (0.5,4.8);
		\fill[KITgreen] (0.5,5.4) rectangle (0,4.8);
		\fill[KITblue] (0,5.4) rectangle (-0.5,4.8);
		\fill[KITred] (-0.5,5.4) rectangle (-1,4.8);
		\fill[KITblue30] (-1,5.4) rectangle (-1.4,4.8);
		\fill[KITred30] (-1.4,5.4) rectangle (-1.8,4.8);
		\fill[KITblue30] (-1.8,5.4) rectangle (-2.3,4.8);
		\fill[KITred30] (-2.3,5.4) rectangle (-2.6,4.8);
		\fill[KITblue30] (-2.6,5.4) rectangle (-2.9,4.8);
		\fill[KITred30] (-2.9,5.4) rectangle (-3.2,4.8);
		\fill[KITblue30] (-3.2,5.4) rectangle (-3.5,4.8);
		\fill[KITred30] (-3.5,5.4) rectangle (-3.65,4.8);
		\fill[KITblue30] (-3.65,5.4) rectangle (-3.8,4.8);	
		\fill[KITred30] (-3.8,5.4) rectangle (-3.9,4.8);
		\fill[KITblue30] (-3.9,5.4) rectangle (-4,4.8);	
		
		\node[anchor=north west,inner sep=0pt,outer sep=0pt] at (-4,4) {DFS post-order:};
		\draw[fill,ultra thick] (-4,3.4) rectangle (0.5,2.8);
		\fill[KITgreen] (0.5,3.4) rectangle (0,2.8);
		\fill[KITblue] (0,3.4) rectangle (-0.5,2.8);
		\fill[KITblue30] (-0.5,3.4) rectangle (-2.0,2.8);
		\fill[KITred] (-2.0,3.4) rectangle (-2.5,2.8);
		\fill[KITred30] (-2.5,3.4) rectangle (-4,2.8);

		\draw[->, original edge] (-4,1.75) -- node[anchor=north,sloped] {rank} (0.5,1.75);
	\end{scope}

	\newcommand{\contour}{(2,3.5) (1,2) (3,1) (4,2) (6,1) (7,2) (6,3) (7,5) (5,6) (4,5) (2,6) (1,5)}
	
	\begin{scope}	
		\clip plot[smooth cycle, tension=0.75] coordinates{\contour};
		
		\fill[KITblue30] (0,0) rectangle (4,7);
		\draw[decorate, decoration={snake, segment length=2cm,amplitude=3pt}, line width=0.2cm,KITblue] (4,3.3) -- (1.5,3.5);
		
		\fill[KITred30] (8,0) rectangle (4,7);	
		\draw[decorate, decoration={snake, segment length=2cm,amplitude=3pt}, line width=0.2cm,KITred] (4,4.1) -- (7,2.6);
		
		\draw[decorate, decoration={snake, segment length=2cm,amplitude=3pt}, line width=0.35cm,KITgreen] (4,6) -- (4,1);
	\end{scope}

	\draw[thick] plot[smooth cycle, tension=0.75] coordinates{\contour};
\end{tikzpicture}
	\caption{
		Visualization of different rank orders derived from the same separator decomposition.
		Left: A graph with a top level separator colored in green that splits the graph into a red and blue cell.
		The separators of the two cells are also colored in dark red and dark blue respectively.		
		Right: A BFS order and DFS post-order visualized.		
		In the BFS order the top-level separator vertices appear last in the order.
		And all vertices of the same layer form a continuous range in the order.
		In the DFS post-order, again, the top-level separator vertices appear last in the order.
		However, in this order sub-trees form continuous ranges.
		Therefore, the separator vertices of the blue cell \emph{and} the entire blue cell appear after all vertices of the red cell.
	}\label{fig:cchpp:sep_decomp}
\end{figure}

\subparagraph{Improved Rank Order.}
To think about cache efficiency in the query, recall that the query essentially consists of walking up paths to the root in the elimination tree.
This becomes more cache efficient if there are many subpaths with consecutive vertex IDs.
To achieve this, we change vertex IDs to match the rank order, i.e., the vertex that is contracted first gets the lowest ID.
Recall that vertices from the same separator are contracted consecutively and form a path in the elimination tree.
Thus, this reordering of the IDs already yields more paths with consecutive IDs.
We can further increase the cache efficiency by choosing the rank order to be a DFS post-order of the elimination tree.
To see that this is better than, e.g., the BFS post-order that is provided by InertialFlowCutter, consider Figure~\ref{fig:cchpp:sep_decomp}.

To compute the improved order, we first compute the augmented graph \augGraph based on the initial nested dissection order provided by the black-box algorithm and thereby obtain the elimination tree.
From this, we obtain the improved rank order.
We note that this detour has no algorithmic reasons but are required because one wants to be able to use the nested dissection algorithms as black-box

\subparagraph{Reconstructing Separator Decompositions.}
\label{sec:reconstruct_sep_tree}

As mentioned before, the black-box algorithms only yields an order and not the separator decomposition.
From this order, we can clearly do the contraction and thereby compute \augGraph and the elimination tree.
Although this is sufficient for the core algorithm, it is sometimes useful to actually know the separator decomposition, not only to improve the rank order but also for nearest neighbor search or parallelization.
We have mentioned before that there is a strong connection between the elimination tree and the separator decomposition.
The following lemma formalizes this connection.

\begin{lemma}
  \label{lem:separator-hierarchy-from-elim-tree}
	Let $T$ be the elimination tree obtained from a nested dissection order and let \vVertex be the highest-ranked vertex with more than one child in $T$.
	Then the set $S$ of vertices on the path from \vVertex to the root is a separator in \aGraph .
\end{lemma}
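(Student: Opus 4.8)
The plan is to reduce the statement to a single structural fact about elimination trees and then read off the separator from the shape of $T$. The fact I rely on is the following \emph{ancestor property}: for every vertex $a$, the ancestors of $a$ in $T$ (including $a$ itself) are exactly the vertices reachable from $a$ in the upward graph $\gchu$. This is essentially what is argued in Section~\ref{sec:queries}: following the parent pointers of the elimination tree from $a$ visits precisely the vertices reachable from $a$ in $\gchu$, because the upward neighborhood of every vertex forms a clique. In particular, since $\edges \subseteq \augEdges$, every original edge $\{a, b\} \in \edges$ with $a \prec b$ is an upward edge out of $a$ in $\gchu$, so $b$ is an ancestor of $a$ in $T$. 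Hence \emph{every edge of $\aGraph$ joins a vertex to one of its ancestors in $T$}; equivalently, no edge of $\aGraph$ connects two vertices lying in incomparable subtrees of $T$.

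Next I would analyze the shape of $T$ around the path $S$. Write $c_1, \dots, c_k$ for the children of $\vVertex$; by the choice of $\vVertex$ we have $k \ge 2$. Every vertex on $S$ other than $\vVertex$ is a strict ancestor of $\vVertex$ and therefore has rank larger than $\vVertex$; since $\vVertex$ is the highest-ranked vertex with more than one child, each of these vertices has at most one child in $T$, and, being an internal vertex of the path $S$ or the root, it has exactly one, namely its successor on $S$. Consequently nothing branches off $S$ above $\vVertex$, and the connected components of $T - S$ are precisely the subtrees $T_{c_1}, \dots, T_{c_k}$ rooted at the children of $\vVertex$. These $k \ge 2$ subtrees are nonempty and their vertex sets partition $\vertices \setminus S$.

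Finally I would combine the two observations. Let $a$ and $b$ lie in different components $T_{c_i}$ and $T_{c_j}$ of $T - S$, and suppose $\{a, b\} \in \edges$ with $a \prec b$. By the ancestor property $b$ is an ancestor of $a$; but the ancestors of $a$ lie in $V(T_{c_i}) \cup S$, which is disjoint from $V(T_{c_j})$, a contradiction. Thus there is no edge of $\aGraph$ between distinct components of $T - S$, so $\aGraph$ restricted to $\vertices \setminus S$ decomposes into the $k \ge 2$ nonempty, pairwise non-adjacent vertex sets $V(T_{c_1}), \dots, V(T_{c_k})$ and is therefore disconnected. Hence $S$ is a separator of $\aGraph$. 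I expect the only genuinely delicate point to be pinning down the ancestor property rigorously, since it rests on the chordality of $\augGraph$ noted right after the contraction definition, i.e.\ on the clique structure of upward neighborhoods; once that is in hand, the tree-structural argument and the final contradiction are routine.
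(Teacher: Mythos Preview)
Your argument is correct. The approach differs from the paper's in a meaningful way, so a brief comparison is in order.

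The paper proves the lemma by a direct path-shortening contradiction in $\augGraph$: assuming an $S$-avoiding path between two children $\uVertex$, $\wVertex$ of $\vVertex$, it repeatedly replaces the lowest-ranked interior vertex by a shortcut (using the contraction rule) until the lowest vertex is an endpoint, and then observes that the next vertex on the path would have to be $\uVertex$'s parent rather than $\vVertex$. In effect, the paper re-derives, in situ, the special case of the ancestor property that it needs.

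You instead invoke the ancestor property globally---vertices reachable from $a$ in $\gchu$ are exactly the ancestors of $a$ in $T$---which the paper has already established in Section~\ref{sec:queries} from the clique structure of upward neighborhoods. Once that is granted, your proof is purely tree-structural: edges of $\aGraph$ only run along ancestor--descendant pairs, the choice of $\vVertex$ forces $T\setminus S$ to split into the subtrees below the children of $\vVertex$, and incomparable subtrees cannot be joined by such edges. This is more modular and arguably cleaner, since it separates the chordality-dependent fact from the combinatorics of the tree; the paper's version is more self-contained but duplicates work already done elsewhere. Your remark that the only delicate point is the ancestor property is accurate, and the paper's Section~\ref{sec:queries} supplies exactly that.
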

\begin{proof}  
  Let \uVertex and \wVertex be two children of \vVertex and assume without loss of generality that $\uVertex \prec \wVertex$.
  Further, let $T_{\uVertex}$ and $T_{\wVertex}$ denote the subtrees rooted at \uVertex respectively \wVertex in $T$.  
  Assume for contradiction that $\augGraph$ has an edge from a vertex in $T_{\uVertex}$ to a vertex in $T_{\wVertex}$.
  This is equivalent to the existence of a path \aPath from \uVertex to \wVertex in \augGraph that avoids $S$.
  Let $\uVertex'$ be the lowest ranked vertex on \aPath and assume without loss of generality that \aPath is chosen such that $\uVertex'$ has the highest rank among all choices of \aPath.

  We first argue that $\uVertex'$ has to be the end of \aPath, which implies that $\uVertex' = \uVertex$ as $\uVertex \prec \wVertex$.
  Otherwise, $\uVertex'$ is adjacent to two vertices in \aPath.
  However, those two vertices have to be connected by an edge after the contraction of $\uVertex'$.
  Therefore, there is a path connecting \uVertex and \wVertex that is equivalent to $P$ except that it skips the lowest ranked vertex $\uVertex'$, contradicting our choice of \aPath.
  
  Secondly, we argue that the existence of \aPath implies that \uVertex is not a child of \vVertex.
  Let $\uVertex''$ be the neighbor of \uVertex in \aPath.
  As $\uVertex$ is the lowest ranked vertex on \aPath, $\uVertex''$ has a higher rank.
  Further, $\uVertex''$ must have a lower rank than \vVertex since
  \aPath avoids $S$ and therefore all vertices with higher or equal rank.
  However, this implies that $\uVertex''$ should be the parent of \uVertex and not \vVertex.
  
  In conclusion this implies that no path avoiding $S$ exists in \augGraph.
  Therefore, no edge connecting $T_{\uVertex}$ and $T_{\wVertex}$ exists in \augGraph and hence, no such edge exists in \aGraph.
\end{proof}

\noindent
We can also conclude that $S$ is a subset of the top-level separator of the original nested dissection order.
If we assume that the black-box algorithm found an inclusion minimal separator, then $S$ is the original top-level separator.
Since $S$ is a separator, removing it from the original graph leaves two or more disconnected cells.
Each cell contains the vertices of a subtree of $T$ where the root is a child of \vVertex.
Recursively applying this idea to each subtree lets us obtain the separator decomposition for the whole graph.

\subparagraph{Contraction.}\label{sec:cchpp:contraction}
Once an order has been obtained, the remaining part of the metric-independent preprocessing is to compute the topology of the augmented graph $\augGraph =(\vertices, \augEdges)$.
For cache efficiency in this and all following phases, it is crucial to first permute the vertex IDs such that IDs equal the ranks.
Moreover, since the graph is undirected, we can increase efficiency by only storing each edge at its lower-ranked endpoint, i.e., maintain \gchu, which is also sufficient for all further steps.

The naive way to construct \gchu is to iterate over all vertices by increasing rank and to ensure that an edge exists between any pair of upward neighbors, inserting shortcuts where the edges are missing.
Because this approach is relatively expensive in terms of running time, Dibbelt, Strasser, and Wagner~\cite{DBLP:journals/jea/DibbeltSW16} propose a faster algorithm based on the \emph{quotient graph}~\cite{gl-aqgms-78}.
However, this algorithm is quite complex.

Here, we describe an algorithm that is simpler and faster.
This algorithm has previously only been described in a Bachelor's thesis~\cite{z-cchtc-19}.
It is heavily based on the linear-time lex-BFS algorithm for chordal graph recognition~\cite{DBLP:journals/tcs/HabibMPV00}.
The key observation of this algorithm is that the naive algorithm considers the same edges multiple times in order to ensure that the upwards neighborhood for each vertex is a clique.
This can be prevented using the following observation.

\begin{observation}
  Consider a vertex \uVertex and its lowest upward neighbor \vVertex, i.e., its parent in the elimination tree.
  Then, in the final graph \augGraph, the upward neighborhood of \vVertex is a superset of the upward neighborhood of \uVertex (up to \vVertex itself).
\end{observation}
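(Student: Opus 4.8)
The plan is to argue directly from the contraction process that defines $\augGraph$, using only that this process adds edges monotonically and never deletes them. Write $N(\uVertex)$ for the upward neighborhood of $\uVertex$ in the final graph $\augGraph$, i.e., the set of neighbors of $\uVertex$ of higher rank. The crucial preliminary observation is that $N(\uVertex)$ is already fully determined at the moment $\uVertex$ is contracted: completing a neighborhood to a clique only inserts edges between vertices that still have higher rank than the vertex currently being contracted, so once $\uVertex$ is contracted, no subsequent contraction of a vertex $\uVertex' \succ \uVertex$ can create a new edge incident to $\uVertex$. Hence $N(\uVertex)$ equals the set of higher-ranked neighbors of $\uVertex$ at the time $\uVertex$ is contracted.

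First I would use this to analyze the contraction of $\uVertex$ itself. By definition, completing the current neighborhood of $\uVertex$ to a clique makes every pair of vertices in $N(\uVertex)$ adjacent. In particular, for each $\wVertex \in N(\uVertex) \setminus \{\vVertex\}$ the edge $\{\vVertex, \wVertex\}$ is present immediately after $\uVertex$ is contracted, whether it was created by the completion or was already there. Since edges are never removed, $\{\vVertex, \wVertex\} \in \augEdges$.

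Next I would check that each such edge is an \emph{upward} edge of $\vVertex$. Because $\vVertex$ is the lowest-ranked vertex of $N(\uVertex)$ -- which is precisely what it means for $\vVertex$ to be the parent of $\uVertex$ in the elimination tree -- every other $\wVertex \in N(\uVertex)$ satisfies $\vVertex \prec \wVertex$. Combining $\vVertex \prec \wVertex$ with $\{\vVertex, \wVertex\} \in \augEdges$ gives $\wVertex \in N(\vVertex)$. As $\wVertex$ ranges over all of $N(\uVertex) \setminus \{\vVertex\}$, this establishes $N(\uVertex) \setminus \{\vVertex\} \subseteq N(\vVertex)$, which is exactly the claim; the element $\vVertex$ must be excluded simply because $\vVertex \notin N(\vVertex)$.

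The only step requiring real care is the preliminary observation that $N(\uVertex)$ is fixed once $\uVertex$ is contracted, and I expect this to be the main (if mild) obstacle. It hinges on the fact that a later contraction of $\uVertex' \succ \uVertex$ completes only the still-uncontracted neighbors of $\uVertex'$ to a clique, and $\uVertex$, being already contracted, is never among them. Granting this, the remainder is a direct consequence of the monotonicity of the edge set under contraction.
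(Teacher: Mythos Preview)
Your argument is correct. The paper actually states this as an observation without any proof, so there is nothing to compare against on that front; you are simply making explicit what the paper leaves to the reader. Your key step---that the upward neighborhood $N(\uVertex)$ is frozen at the moment $\uVertex$ is contracted because every later contraction only inserts edges among vertices of rank strictly above the currently processed one---is exactly the right way to justify the claim, and it is implicit in the paper's remark that the rank order is a perfect elimination scheme of the chordal supergraph $\augGraph$. From there your deduction that the clique completion at $\uVertex$ forces $\{\vVertex,\wVertex\}\in\augEdges$ for all $\wVertex\in N(\uVertex)\setminus\{\vVertex\}$, together with $\vVertex\prec\wVertex$ by minimality of $\vVertex$, is straightforward and airtight.
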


With this in mind we can delay parts of the completion of \uVertex's upward neighborhood to later.
Assuming that the algorithm later completes the upward neighborhood of \vVertex to a clique (this is essentially the induction hypothesis), then the upward neighborhood of \uVertex is also completed to a clique if we make sure that every upward neighbor of \uVertex (except \vVertex) is also an upward neighbor of \vVertex.
Thus, when processing \uVertex, it suffices to set $\mathtt{N}[\vVertex] \gets \mathtt{N}[\vVertex] \cup \mathtt{N}[\uVertex] \setminus \vVertex$.

There are some degrees of freedom for how to implement this procedure.
We propose to use the variant shown in Algorithm~\ref{algo:elimination}.
In contrast to the high level description before, we do not union neighborhoods but simply concatenates them.
This temporarily creates duplicate entries in the neighborhoods that need to be removed later.
However, Zündorf showed that this does not change the theoretical worst-case running time~\cite{z-cchtc-19}.
In addition, we choose to not only compute \augGraph, but also store the neighborhoods sorted by rank.
This is not only beneficial for further computations, it also allows to remove duplicates efficiently with a single linear sweep.
Note that we delay both steps -- sorting and deduplication -- until a vertex is processed.
This is not asymptotically optimal but simpler to implement and equally fast in practice.

\begin{algorithm}[tb]
	\caption{Contraction algorithm.}
	\label{algo:elimination}
	\KwData{\gchu: The input graph with edges directed to higher ranked vertices.}
	\SetKw{Continue}{continue}
	\SetKwFunction{sorted}{Sorted}
	\SetKwFunction{unique}{Unique}
	\SetKwFunction{contraction}{Contraction}
	
	\BlankLine
	\Fn{\contraction}{
		\For{all vertices $\uVertex \in \vertices$ in rank order}{
			\If{$\mathtt{N}[\uVertex]$ not empty}{
				$\mathtt{N}[\uVertex] \gets \sorted(\mathtt{N}[\uVertex])$ \tcp{sort edges \uv by increasing rank of \vVertex}
				$\mathtt{N}[\uVertex] \gets \unique(\mathtt{N}[\uVertex])$\;
				$\uv \gets \mathtt{N}[\uVertex][0]$\;\label{algo:elimination:parent}
				$\mathtt{N}[\vVertex] \gets \mathtt{N}[\vVertex] + \mathtt{N}[\uVertex][1\dots]$\;
			}
		}
	}
\end{algorithm}

\subparagraph{Elimination Tree.}
We represent the elimination tree as an array \elimtree of length \numVertices where position ID containing the parent for the vertex with that ID.
As root vertices have no parent, we represent this with some sentinel value, which we denote as $\bot$.
Note that the parent is already computed during the contraction in Line~\ref{algo:elimination:parent} of Algorithm~\ref{algo:elimination} and only needs to be stored.

\subsection{Customization}\label{sec:cchpp:customization}

We now want to explain the customization on a deeper level.
First, recall that for this phase we get access to the edge lengths \len and need to compute the edge lengths for the augmented graph.
As mentioned in Section~\ref{sec:cchpp:contraction}, we only maintain the upwards edges of the augmented graph \augGraph.
This is  sufficient from a topological point of view, but we still need to handle different edge weights for the forward and backward search.
For this, we store the weights in two arrays $\lchu$ and $\lchd$ accessible by the corresponding edge IDs from \echu, i.e., for $\uv \in \echu$, $\lchu[\uv] = \augLen(\uVertex, \vVertex)$ and $\lchd[\uv] = \augLen(\vVertex, \uVertex)$.

\subsubsection{Basic Customization}
\label{sec:cchpp:basic}

Recall from Subsection~\ref{ssec:customization} that we need to enforce the lower triangle inequality for all triangles in \augGraph, i.e., when processing the triangle $(\uVertex,\vVertex,\wVertex)$ with $\uVertex \prec \vVertex \prec \wVertex$, the length of \vw is potentially updated based on the lengths of \uv and \uw.
As we do not want to process $(\uVertex,\vVertex,\wVertex)$ multiple times, the triangles are enumerated in an order that processes every lower triangle of \uv and \uw before $(\uVertex,\vVertex,\wVertex)$.
We call such an order a \emph{bottom-up order}.
In the following, we discuss the advantages and disadvantages of different orders that satisfy this property as well as how to actually enumerate the triangles.

Two bottom-up orders have been proposed in the literature~\cite{DBLP:journals/jea/DibbeltSW16, DBLP:journals/jea/Buchhold0W19}.
Both variants entail iterating over the edges in a lexicographic order, i.e., sorted by the rank of the lower and upper endpoint.
This can be easily achieved by iterating the vertices in rank order and then, for each vertex \uVertex, iterating the outgoing neighbors of \uVertex in rank order.
For each edge \uv, the original CCH publication by Dibbelt, Strasser and Wagner then enumerates all lower triangles of \uv~\cite{DBLP:journals/jea/DibbeltSW16}, which clearly yields a bottom-up order.
Buchhold, Sanders and Wagner propose to instead enumerate all upper triangles of \uv~\cite{DBLP:journals/jea/Buchhold0W19}.
We note that in the latter case, the enumerated triangle $(\uVertex, \vVertex, \wVertex)$ is updated as lower triangle for \vw, which can be shown to indeed also yield a bottom-up order.

The first order by Dibbelt et al.~\cite{DBLP:journals/jea/DibbeltSW16} has the benefit that, for each edge \uv all lower triangles of \uv are consecutive.
Thus, all changes to the length of \uv happen consecutively, which simplifies parallelization.
However, since this approach enumerates the triangles below a given edge \uv, it is necessary to store the incoming edges, i.e., this requires to at least temporarily compute $\echd$.
The second order by Buchhold et al.~\cite{DBLP:journals/jea/Buchhold0W19} enumerates upper triangles of \uv, which does not require $\echd$ and additionally turns out to be more efficient.
However, since the edges that get updated are more widely distributed, the parallelization is more difficult.

In the following, we first discuss how to actually enumerate the triangles in these orders using a so-called coordinated linear sweep.
We also briefly discuss what makes the approach by Buchhold et al.\ more efficient.
Afterwards, we propose a new way to enumerate the triangles that we call batched triangle relaxation.
It is as efficient as the variant by Buchhold et al.\ while also being easy to parallelize.

\subparagraph{Coordinated Linear Sweep.}

Both Dibbelt et al.~\cite{DBLP:journals/jea/DibbeltSW16} and Buchhold et al.~\cite{DBLP:journals/jea/Buchhold0W19} utilize the fact that the intersection of two sorted lists can be found by iterating through both simultaneously.
To be more precise, for a fixed edge \uv they iterate the neighborhoods of \uVertex and \vVertex in sorted rank order.
Whenever a vertex is seen in both neighborhoods this completes a triangle.  
This process is also refered to as a \emph{coordinated linear sweep}.

The only difference here is that Dibbelt et al.\ consider the downwards neighborhood whereas Buchhold et al.\ use the upwards neighborhood.
However, this difference is critical for performance.
Buchold et al.\ observed that their approach of enumerating upper triangles is significantly faster due to the distribution of the vertex in-degrees and out-degrees~\cite{DBLP:journals/jea/Buchhold0W19}.

Additionally, enumerating upper triangles allows for a simpler implementation of the coordinated linear sweep using the fact that the augmented graph is chordal.
This essentially turns it into a linear sweep (getting rid of the ``coordinated'').
To explain this, let \uv be the edge for which we want to enumerate all up upper triangles.
As the upwards neighborhood of each vertex is a clique, the upwards neighborhood $\mathtt{N}[\uVertex]$ is a subset of $\mathtt{N}[\vVertex]$.
Therefore, each vertex in $\mathtt{N}[\uVertex]$ yields exactly one triangle all of which can be found with a single linear sweep through $\mathtt{N}[\vVertex]$ as shown in Algorithm~\ref{algo:basic_customization_buchhold}.

\begin{algorithm}[tb]
	\caption{Basic customization algorithm with linear sweep.}
	\label{algo:basic_customization_buchhold}
	\KwData{$\lchu[\uv]$: length of $\uv \in \echu$ in the augmented graph.}
	\KwData{$\lchd[\uv]$: length of $\vu \in \echd$ in the augmented graph.}
	\SetKwFunction{Basic}{BasicCustomization}
	
	\BlankLine
	\Fn{\Basic}{
		\For{all vertices $\uVertex \in \vertices$ in rank order}{
			\For{\uv in $\echu$}{
				$k \gets 0$\;
				\For{\uw in $\echu$ sorted by $\wVertex$ and $\vVertex \prec \wVertex$}{
					\While{$\mathtt{N}[\vVertex][k] \neq \vw$}{\label{algo:basic_customization_buchhold:sweep}
						$k \gets k+1$\;
					}
					\tcp{$(\uVertex,\vVertex,\wVertex)$ is a triangle with $\uVertex \prec \vVertex \prec \wVertex$}
					$\lchu[\mathtt{N}[\vVertex][k]] \gets \min(\lchu[\mathtt{N}[\vVertex][k]], \lchd[\uv] + \lchu[\uw])$ \tcp{forward}
					$\lchd[\mathtt{N}[\vVertex][k]] \gets \min(\lchd[\mathtt{N}[\vVertex][k]], \lchu[\uv] + \lchd[\uw])$ \tcp{backward}
				}
			}
		}
	}
\end{algorithm}

\subparagraph{Batched Triangle Relaxation.}

Here we propose a new way to enumerate triangles.
Its efficiency is on-par with Algorithm~\ref{algo:basic_customization_buchhold} while also being easy to parallelize.
It is based on two ideas.
The first idea is used since the eighties~\cite{DBLP:journals/siamcomp/ChibaN85} to enumerate triangles but has not been used for CCHs before.
The other is based on choosing yet another order of triangles.
The only downside of this new approach is that it also temporarily requires the downward edges \echd.
Thus, the Algorithm~\ref{algo:basic_customization_buchhold} remains the preferred option for the case where no parallelization is required.
If parallelization is required, the approach we describe in the following is the method of choice.
Our reference implementation uses parallelization and we already compute \echd during the metric-independent preprocessing.

The first core idea, ignoring edge directions for now, is the following~\cite{DBLP:journals/siamcomp/ChibaN85}.
To enumerate all triangles containing a specific vertex \uVertex, first mark all neighbors of \uVertex.
Then, enumerate all paths consisting of two edges starting at \uVertex, i.e., consider a neighbor \vVertex of \uVertex and a neighbor \wVertex of \vVertex.
Then $(\uVertex, \vVertex, \wVertex)$ forms a triangle if and only if \wVertex is a neighbor of \uVertex.
This can be checked in constant time as we marked all the neighbors of \uVertex.
After having enumerated all triangles containing \uVertex, we unmark the neighbors of \uVertex.
As we are not only interested in whether $(\uVertex, \vVertex, \wVertex)$ is a triangle but also need to know the edge ID of \uw in case the answer is yes, we use the corresponding edge IDs to mark the neighbors of \uVertex (instead of just setting a Boolean flag).

The second core idea is to choose an order that, for every vertex \uVertex, enumerates all triangles for which \uVertex is the central vertex consecutively.
We call these triangles the \emph{batch} of $u$.
Note that, for any edge $\uv \in \echu$, all lower triangles of \uv are in the same batch, namely in the batch of \uVertex.
Thus, the length of \uv is only updated in one batch and we can easily run the different batches in parallel.
Note that this is a weaker condition than actually enumerating all lower triangles of \uv consecutively.
This gives us the freedom to choose an order that has better performance due to in/out-degree effects, similar to what makes enumerating upper triangles faster than enumerating lower triangles.

A slightly different way to view this is the following.
Enumerating triangles essentially boils down to enumerating paths consisting of two edges and then checking whether the third edge is present.
Using the order by Dibbelt et al.~\cite{DBLP:journals/jea/DibbeltSW16} these paths are up-down paths, i.e., starting from a vertex, one goes up by iterating all outgoing edges and from the target vertex one goes down by iterating all incoming edges.
Due to how in and out degrees are distributed, up-down this is slower than the variant by Buchhold et al.~\cite{DBLP:journals/jea/Buchhold0W19} enumerating up-up paths.
To enumerate the triangles in batches, we cannot choose up-up paths.
However, we can choose down-up paths, which turns out to similarly efficient.

Using these core ideas yields Algorithm~\ref{algo:basic_customization}, which we call \emph{batched triangle relaxation}.
It iterates over all vertices \uVertex in rank order.
It then first marks all upward neighbors \vVertex with the corresponding edge ID \uv; see Line~\ref{algo:batched_triangle_basic:mark_neighbors}.
Once all triangles that have \uVertex as central vertex are enumerated, the upward neighbors of \uVertex are unmarked in Line~\ref{algo:batched_triangle_basic:cleanup}.
To enumerate these triangles, we iterate over all downward neighbors \wVertex of \uVertex and then over all upward neighbors \vVertex of \wVertex that are above \uVertex.
This can be achieved by iterating the upward neighbors of \wVertex in reverse rank order and stopping, once we reach \uVertex; see Line \ref{algo:batched_triangle_basic:break}.
If the edge \uv exists, which can be checked due marking, then the triple $(\wVertex, \uVertex, \vVertex)$ forms an upper triangle of \uv.

To show that our batched triangle relaxation is correct, we have to argue that the resulting triangle order is a bottom-up order.
For this, let $(\wVertex, \uVertex, \vVertex)$ be a triangle belonging to the batch of \uVertex with $\wVertex \prec \uVertex \prec \vVertex$.
We have to argue that all lower triangles of \wu and all lower triangles of \wv have been processed before $(\wVertex, \uVertex, \vVertex)$.
Note that lower triangles of \wu and lower triangles of \wv have \wVertex as central vertex.
Thus these triangles have been processed in the batch of \wVertex, which happens before processing the batch of \uVertex containing $(\wVertex, \uVertex, \vVertex)$.

\begin{algorithm}[tb]
	\caption{Basic customization algorithm with batched triangle relaxing.}
	\label{algo:basic_customization}
	\KwData{$\lchu[\uv]$: length of $\uv \in \echu$ in the augmented graph.}
	\KwData{$\lchd[\uv]$: length of $\vu \in \echd$ in the augmented graph.}
	\KwData{$\mathtt{ID}[\vVertex]$: edge ID of \uv where \uVertex is the current vertex in the outer loop, initially~\makebox[0pt][l]{$\bot$.}}
	\SetKw{Break}{break}
	\SetKwFunction{Basic}{BasicCustomization}
	
	\BlankLine
	\Fn{\Basic}{
		\For{all vertices $\uVertex \in \vertices$ in rank order}{
			\label{algo:batched_triangle_basic:main_loop}
			\For{all edges $\uv \in \echu$}{
				$\mathtt{ID}[\vVertex] \gets \uv$\;
                          \label{algo:batched_triangle_basic:mark_neighbors}
			}
			\For{all edges $\uw \in \echd$}{ \label{algo:batched_triangle_basic:echd}
				\For{all edges $\wv \in \echu$ ordered by descending rank of \vVertex}{
					\If{$\vVertex \preceq \uVertex$}{
						\Break\;
						\label{algo:batched_triangle_basic:break}
					}
					\tcp{$(\wVertex,\uVertex,\vVertex)$ is a triangle with $\wVertex \prec \uVertex \prec \vVertex$}
					\If{$\mathtt{ID}[\vVertex] \neq \bot$}{
						$\lchu[\mathtt{ID}[\vVertex]] \gets \min(\lchu[\mathtt{ID}[\vVertex]], \lchd[\uw] + \lchu[\wv])$ \tcp{forward}
						$\lchd[\mathtt{ID}[\vVertex]] \gets \min(\lchd[\mathtt{ID}[\vVertex]], \lchu[\uw] + \lchd[\wv])$ \tcp{backward}
					}
				}
			}
			\For{all edges $\uv \in \echu$}{
				\label{algo:batched_triangle_basic:cleanup}
				$\mathtt{ID}[\vVertex] \gets \bot$\;
			}
		}
	}
\end{algorithm}

\subparagraph{Unpacking data.}
So far we have shown the minimum required customization steps to get correct queries as outlined in Observation~\ref{obs:core-observation}.
However, most application not only need to compute the shortest distance between two vertices but also the corresponding path that connects them.
One important step of this is \emph{unpacking} edges from $\augEdges \setminus \edges$ and replace them with the corresponding edges from \edges.
The original CCH publication \cite{DBLP:journals/jea/DibbeltSW16} suggests to not maintain any additional data for this process and instead unpack edges at query time by enumerating lower triangles.
This minimized the customization time and memory requirements at the cost of slower unpacking and therefore slower query times.
If faster queries are desirable (at the cost of slower customization), we can store additional unpacking data along each edge to allow faster unpacking.
One options is to store the lower vertex of the shortest lower triangle.
However, the largest speed up for the query can be achieved by storing both edges forming the shortest triangle along each edge.
Note that in both cases this data has to be stored for the forward and backward search separately.
With this data each edge can be unpacked asymptotically optimal, i.e., in time proportional to output size.

\subsubsection{Perfect Customization}
\label{sec:cchpp:perfect}

Recall from Subsection~\ref{ssec:customization} that we can improve query times by identifying and removing edges in \augGraph that are not part of shortest paths for the given metric.
For this we need to enumerate all triangles again and relax each one as upper and intermediate triangle (and not as lower triangle as for the basic customization).
As before, choosing the right order of triangles makes it sufficient to consider each triangle only once.
This time, wee need to use a \emph{top-down} order.
More precisely, Dibbelt et al.\ showed that ordering the triangles by the rank of the lowest ranked vertex allows to to identify all edges that can be removed~\cite{DBLP:journals/jea/DibbeltSW16}.
We note that this is not obvious and uses the fact that the graph is chordal.
We also want to note that this notion of top-down is not inverse to the notion of bottom-up used for the basic customization in Section~\ref{sec:cchpp:basic}.

Algorithm~\ref{algo:perfect_customization} shows how to enumerate the triangles in a top-down order, using a (coordinated) linear sweep that is almost identical to the one in Algorithm~\ref{algo:basic_customization_buchhold} for the basic customization.
The only difference is that the vertices are processed in reverse rank order.
Recall that, for the basic customization, Algorithm~\ref{algo:basic_customization_buchhold} had the disadvantage that parallelization is difficult as the potential length updates are scattered throughout the order.
Fortunately, this is not an issue for the perfect customization.
For a vertex \uVertex, Algorithm~\ref{algo:perfect_customization} enumerates all triangles with \uVertex as bottom vertex and then potentially updates one of the two edges incident to \uVertex (and not the edge opposite to \uVertex as in the basic customization).
Thus, all potential changes to outgoing edges of \uVertex happen consecutively.

\begin{algorithm}[tb]
	\caption{Perfect customization algorithm with batched triangle relaxing.}
	\label{algo:perfect_customization}
	\KwData{$\lchu[\uv]$: length of $\uv \in \echu$ in the augmented graph.}
	\KwData{$\lchd[\uv]$: length of $\vu \in \echd$ in the augmented graph.}
	\SetKwFunction{Perfect}{PerfectCustomization}
	
	\BlankLine
	\Fn{\Perfect}{
		\For{all vertices $\uVertex \in \vertices$ in reversed rank order}{\label{algo:perfect_customization:order}
			\For{\uv in $\echu$}{
				$k \gets 0$\;
				\For{\uw in $\echu$ sorted by $\wVertex$ and $\vVertex \prec \wVertex$}{
					\While{$\mathtt{N}[\vVertex][k] \neq \vw$}{
						$k \gets k+1$\;
					}
					\label{algo:perfect_customization:relax}
					\tcp{$(\uVertex,\vVertex,\wVertex)$ is an upper triangle of \uv}
					$\lchu[\uv] \gets \min(\lchu[\uv], \lchu[\mathtt{N}[\vVertex][k]] + \lchd[\uw])$ \tcp{forward}
					$\lchd[\uv] \gets \min(\lchd[\uv], \lchd[\mathtt{N}[\vVertex][k]] + \lchu[\uw])$ \tcp{backward}
					
					\tcp{$(\uVertex,\vVertex,\wVertex)$ is an intermediate triangle of \uw}
					$\lchu[\uw] \gets \min(\lchu[\uw], \lchu[\uv] + \lchu[\mathtt{N}[\vVertex][k]])$ \tcp{forward}
					$\lchd[\uw] \gets \min(\lchd[\uw], \lchd[\uv] + \lchd[\mathtt{N}[\vVertex][k]])$ \tcp{backward}
				}
			}
		}
	}
\end{algorithm}

Recall that the perfect customization is not necessary to obtain correct shortest paths.
Its purpose is to identify edges that are superfluous and can thus be removed.
An edge is superfluous if its length changes during the perfect customization.
Thus, these edges have to be marked for deletion during the perfect customization and can then be removed afterwards.
Note that this also means that we do not need to update unpacking information when changing an edge length, as these edges will be deleted anyways.
For an efficient parallel implementation of marking superfluous edges for deletion, these markings must be individually addressable, i.e., they must stored in bytes not bits.
With a more compact representation, two threads might need to write to the same address leading to race conditions.

\subparagraph{Reduced Graph Construction.}

After the perfect customization we have to remove all superfluous edges from \augGraph.
To be more precise, we want to construct one graph for the forward search and one graph for the backward search each of which only contains edges that are not marked for that direction.
The original CCH publication~\cite{DBLP:journals/jea/DibbeltSW16} does not provide details on this process, probably because it appears trivial from an algorithmic point of view.
However, building the reduced augmented graph makes up a significant share of the total running time of the customization~\cite{DBLP:journals/jea/Buchhold0W19}.
Therefore, an efficient parallelization is essential.
In the following, we describe how this can be achieved.
We note that this part is somewhat technical since our approach is specific for our graph representation which we need to define first.

In our graph representation all edges are stored sequentially in an array.
The only requirement we need is that the edges are grouped and sorted by their tail, i.e., if $\uVertex\prec\vVertex$ all edges from \uVertex appear before any edge from \vVertex.
Thus, the neighborhood of each vertex forms a continuous range in the edge array.
To be able to efficiently determine these ranges for each vertex, we also store the ID of the first edge for each vertex in a second array.

With this graph representation in mind our algorithm needs to complete three tasks.
First it has to compute the new ranges of edge IDs for each vertex.
To be more precise it has to compute the new ID of the first edge for each vertex.
After that it has to move all edges we want to keep and their associated auxiliary data into a new array.
Note that we again want to store all edges sequentially without gaps.
In the last step, one has to update auxiliary data that depends on edge IDs (which have changed) like unpacking information.

To parallelize these steps we propose to split the graph into $\betaParameter \cdot \numCores$ consecutive chunks of vertices.
Here \numCores is the number of cores and \betaParameter a tuning parameter, which we set to $4$ by default.
Note that we want each chunk to contain roughly the same number of outgoing edges not vertices.
Therefore, we find the ID of the first vertex of chunk $i$ by taking the tail of the edge with ID $i \cdot \lceil \frac{\numEdges}{\betaParameter \numCores} \rceil$, i.e., we split based on the number of edges but round by neighborhoods.

To obtain the new edge IDs we first process all chunks in parallel to count the edges that will remain in that chunk.
Afterwards a sequential prefix sum over these sums yields the reduced edge ID range for each chunk.
Afterwards, we process all chunks again in parallel.
Starting from the first edge ID for the chunk the first edge ID for each vertex can be computed.
Additionally, we can already copy all edges and the associated data to their new destination.
The associated edge data consists at the very least of the head vertex of each edge and the weight.
In our implementation of CCH, we also maintain arrays with each vertex's tail and the unpacking information.
Note that the unpacking information will be temporarily invalid because the referenced edge IDs are for \augEdges and not the reduced graph \minAugEdges we are currently building.
Fixing this is the third pass's goal but requires an explicit edge ID mapping.
For this, we also establish a mapping from old edge IDs to new edge IDs.
This mapping can be computed while the edges are copied to the reduced graph \minAugGraph.
Finally, in the last pass, we apply this mapping to the unpacking information.
In this pass all edges can be processed independently.
Therefore, any kind of parallelization on the edge level will work.\looseness-1

\subsubsection{Parallelization}
\label{sec:cchpp:parallelization}

Parallelization works similar for the basic and perfect customization, which is why we discuss them jointly.
In Sections~\ref{sec:cchpp:basic} and \ref{sec:cchpp:perfect}, we already hinted at the fact that choosing the triangle order to have all edits to the same edge more or less consecutive is helpful for parallelization.
Additionally, we have to make sure that all information necessary to relax a triangle have been computed before the triangle is relaxed.

The original CCH publication~\cite{DBLP:journals/jea/DibbeltSW16} already proposed a solution to this problem.
Unfortunately, their approach required many synchronization steps.
Here, we focus on a task-based parallelization approach introduced in~\cite{DBLP:journals/jea/Buchhold0W19}.
We first explain this approach for the perfect customization and afterwards transfer it to our batched triangle relaxation for the basic customization.

The division of work is based on the separator decomposition and each task is responsible for one subgraph and its top-level separator.
Initially there is only one task and it is responsible for the whole graph.
The key observation here is that the top-level separator splits the graph into two independent subproblems and by construction all vertices in those subgraphs have a lower rank, i.e., should be processed afterwards.
Therefore, a task simply processes all vertices of the top-level separator sequentially and afterwards spawns a new task for each subgraph.
We note that if the size of subgraph is below a certain threshold, it is more efficient to handle the whole subgraph sequentially.
In our reference implementation $\numVertices/(\alphaParameter \cdot \numCores)$ is used as the threshold value, with \numCores being the number of cores and the tuning parameter \alphaParameter set to $32$.
This value was suggested by~\cite{DBLP:journals/jea/Buchhold0W19}.
The main benefit of this approach is that no synchronization is necessary as a task is only spawned when it is ready to be processed.

We now explain how to transfer this task-based approach to the basic customization.
The relevant difference here is that triangles have to be processed bottom up.
Therefore, the tasks need to be processed in reversed order.
To be more precise, we can only process a task after all its children have been completed.
For this, some synchronization is necessary, but Buchhold et al.~\cite{DBLP:journals/jea/Buchhold0W19} observed that the overhead for this approach is much smaller than for the approach in the original CCH publication~\cite{DBLP:journals/jea/DibbeltSW16}.

The batched triangle relaxation algorithm shown in Algorithm~\ref{algo:basic_customization} can also be parallelized with this separator-based approach.
Besides the handling of the tasks there is only one change necessary.
The outer loop in Line~\ref{algo:batched_triangle_basic:main_loop} has to be restrict to the separator of the current subgraph.
Because only weights associated with edges of the current vertex \uVertex are modified, no concurrent modifications can occur.
Note that Algorithm~\ref{algo:basic_customization_buchhold} cannot be adopted that easily since the writes are not local.

\subsection{Queries}

The elimination tree query is the preferred query algorithm for CCHs since it is fastest and no priority queues are required; see Algorithm~\ref{algo:query}.
For each vertex \vVertex, the query maintains two tentative distances $\distArr^{\uparrow}[\vVertex]$ and $\distArr^{\downarrow}[\vVertex]$ representing the shortest distance found so far from $s$ and to $t$, respectively.
Starting at $s$ and $t$, it walks up the elimination tree and processes each vertex on the way.
Processing a vertex means to relax all outgoing edges using the distance functions $\lchu$ and $\lchd$ for searches from $s$ and $t$, respectively.
We note that one cannot simply stop the searches from $s$ and $t$ meet, as one might find a shorter path higher up in the tree.
However, Buchhold et al.~\cite{DBLP:journals/jea/Buchhold0W19} proposed the following optimization for pruning the search.
For this, one maintains a tentative total distance $\mu$ to be the minimum of $\distArr^{\uparrow}[\vVertex] + \distArr^{\downarrow}[\vVertex]$ over all vertices \vVertex.
Then, if a vertex \uVertex is processed for which the tentative distance is already bigger than $\mu$, we do not need to relax the outgoing edges as the shortest path will not contain \uVertex; see Line~\ref{algo:query:check_tent} of Algorithm~\ref{algo:query}.
With this, elimination tree queries are consistently faster than classical CH queries (on CCH) across all distances.

Buchhold et al. present another optimization which accelerates queries further.
Because the search graphs are acyclic, the distance of a vertex will never be read again after its outgoing edges were relaxed.
Thus, if we make sure that the searches from $s$ and $t$ are synchronized, we can immediately reset the tentative distances to $\infty$ after a vertex is processed, as shown in Line~\ref{algo:query:reset}.
A separate distance resetting step after the query -- as proposed in~\cite{DBLP:journals/jea/DibbeltSW16} -- becomes unnecessary.
How the two searches are synchronized can be seen in Lines~\ref{algo:query:interleave} and~\ref{algo:query:interleave2}.
Before the two searches have met, Line~\ref{algo:query:interleave} makes sure that the search that is currently lower catches up.
Line~\ref{algo:query:interleave2} shows how the searches continue together after having met.

\begin{algorithm}[tb]
	\caption{Elimination tree query.}
	\label{algo:query}
	\KwData{$\elimtree[\vVertex]$: the parent of \vVertex in the elimination tree (or $\bot$).}
	\KwData{$\lchu[\uv]$: length of $\uv \in \echu$ in the augmented graph.}
	\KwData{$\lchd[\uv]$: length of $\vu \in \echd$ in the augmented graph.}
	\KwData{$\distArr^{\uparrow}[\vVertex]$: tentative distance of \vVertex in the forward search, initially~$\infty$.}
	\KwData{$\distArr^{\downarrow}[\vVertex]$: tentative distance of \vVertex in the backward search, initially~$\infty$.}
	\SetKwFunction{query}{Query}
	\SetKwFunction{processVertex}{ProcessVertex}
	
	\BlankLine
	\Fn{$\processVertex(\uVertex, \tentDist, \distArr, \mathtt{l})$}{
		\If{$\distArr[\uVertex] < \tentDist$}{
			\label{algo:query:check_tent}
			\For{all edges $\uv \in \echu$}{
				$\distArr[\vVertex] \gets \min(\distArr[\vVertex], \distArr[\uVertex] + \mathtt{l}[\uv])$\;
			}
		}
		$\distArr[\uVertex] \gets \infty$\;
		\label{algo:query:reset}
	}
	
	\BlankLine
	\Fn{$\query(\aSource, \aTarget)$}{
		\While{$\aSource \neq \aTarget$}{
			\tcp{we assume that $\mathtt{rank}[\bot]=\infty$}
			\If{$\mathtt{rank}[\aSource] < \mathtt{rank}[\aTarget]$}{
				\label{algo:query:interleave}
				$\processVertex(\aSource, \infty, \distArr^{\uparrow}, \lchu)$\;
				$\aSource \gets \elimtree[\aSource]$\;
			}
			\Else{
				$\processVertex(\aTarget, \infty, \distArr^{\downarrow}, \lchd)$\;
				$\aTarget \gets \elimtree[\aTarget]$\;
			}
		}
		$\tentDist \gets \infty$\;
		$\uVertex \gets \aSource$\;
		\While{$\uVertex \neq \bot$}{
			\label{algo:query:interleave2}
			$\tentDist \gets \min(\tentDist, \distArr^{\uparrow}[\uVertex] + \distArr^{\downarrow}[\uVertex])$\;
			$\processVertex(\uVertex, \tentDist, \distArr^{\uparrow}, \lchu)$\;
			$\processVertex(\uVertex, \tentDist, \distArr^{\downarrow}, \lchd)$\;
      $\uVertex \gets \elimtree[\uVertex]$\;
		}
		\Return{\tentDist}\;	
	}
\end{algorithm}

\section{Extended Scenarios}\label{sec:cchpp:ext_queries}

So far we focused on the basic CCH algorithms which enables us to speed up point-to-point shortest path queries while still allowing updates to the metric.
This chapter introduces solutions for extended scenarios like one-to-many queries, point-of-interest queries and turn costs that are used in common routing applications.

\subsection{Incremental One-to-Many}

The previous sections only considered point-to-point queries.
While this is the most common type of query, it is often desirable to do one-to-many (or many-to-one) queries, where we want to find the shortest path from one source $\aSource$ to many targets $\aTarget_1,\dots,\aTarget_k$.
Here we consider the incremental variant where the source is given in the beginning and the targets are given one by one.
Strasser and Zeitz considered this setting for classical contraction hierarchies and developed an algorithm called \emph{Lazy RPHAST} \cite{DBLP:journals/jea/StrasserZ22}.
It can in principle be applied directly to CCHs.
However, this has the disadvantage that it does not use the improved query based on elimination trees.
Here we propose an adaptation of Lazy RPHAST to CCHs utilizing the elimination tree, which makes the incremental one-to-many setting more efficient and simpler.

To explain our approach based on the elimination tree, we first need to make some observations about the tentative distances computed by the search.
When doing the forward search from \aSource, we compute the tentative distance $\distArr^{\uparrow}[\vVertex]$ for each ancestor \vVertex of \aSource in the elimination tree.
It indicates that there is a path of this length from \aSource to \vVertex.
However, other than with a Dijkstra search, we do not have a guarantee that $\distArr^{\uparrow}[\vVertex]$ will actually become the distance from \aSource to \vVertex at some point.
If we want the distance from \aSource to \vVertex, we would additionally need to do a backward search from \vVertex, which might yield a shorter path via a vertex further up in the elimination tree.

The core idea of our approach is to not only compute these tentative distances but also the actual distance from \aSource to vertices for which we get this information more or less for free.
We memoize this information by storing it in an additional array $\distArr[\vVertex]$.
In the following we first describe how and for which vertices \vVertex we compute $\distArr[\vVertex]$.
Then we discuss how this information is used to speed up subsequent searches for different target vertices.
The result is shown in Algorithm~\ref{algo:lazy_rphast_elim_tree}

Assume we already did the forward search from \aSource yielding the tentative distances $\distArr^{\uparrow}[\vVertex]$ for every ancestor \vVertex of \aSource.
Now consider the first target vertex $\aTarget_1$ and let $v_1, \dots, v_\ell$ be the path from $\aTarget_1$ to the root of the elimination tree with $v_1 = \aTarget_1$.
Observe that for the root $v_\ell$, we get $\distArr[v_\ell] = \distArr^{\uparrow}[v_\ell]$.
Moreover, for $v_{\ell - 1}$, the distance is either $\distArr^{\uparrow}[v_{\ell - 1}]$ or $\distArr[v_\ell]$ plus the length $\lchd[v_{\ell - 1}v_\ell]$ of the edge from $v_{\ell}$ to $v_{\ell - 1}$.
In general, we can observe that for any $\vVertex \in \{v_1, \dots, v_\ell\}$ it holds that $\distArr[v] = \min\big(\distArr^{\uparrow}[v], \min_{vw \in \echu}\{\lchd[\vw] + \distArr[w]\}\big)$.
Thus, instead of doing the backward search from $t_1$, we can process the path from the root down to $t_1$ and compute $\distArr[\vVertex]$ for ever vertex $\vVertex$ on the way.
In Algorithm~\ref{algo:lazy_rphast_elim_tree}, you can observe that we first search upwards in the tree to find the path to the root starting in Line~\ref{algo:lazy_rphast_elim_tree:explore}.
Then, starting in Line~\ref{algo:lazy_rphast_elim_tree:process}, the path is processed from top to bottom, storing the distances from \aSource on the way.

For the second aspect of how we can save time in subsequent queries, we can observe that once the upward search from $\aTarget_i$ reaches a vertex \vVertex with already known value $\distArr[\vVertex]$, we do not go up further as $\distArr[\uVertex]$ is already known for all ancestors of \vVertex.
This is why in Line~\ref{algo:lazy_rphast_elim_tree:explore}, we do not stop at the root but once $\distArr[\vVertex]$ is known.

\begin{algorithm}[tb]	
	\caption{Elimination tree based Lazy RPHAST algorithm.}
	\label{algo:lazy_rphast_elim_tree}
	\KwData{$\elimtree[\vVertex]$: the parent of \vVertex in the elimination tree (or $\bot$).}
	\KwData{$\distArr^{\uparrow}[\vVertex]$: distance from \aSource to \vVertex as computed by the the forward search (or $\infty$).}
	\KwData{$\lchd[\uv]$: length of $\uv \in \echd$ in the augmented graph.}
	\KwData{$\distArr[\vVertex]$: memoized distance from \aSource to \vVertex, initially~(or $\bot$).}
	
	\SetKwFunction{LazyRPHAST}{LazyRPHAST}	
	\BlankLine
	\Fn{\LazyRPHAST{\uVertex}}{
		\tcp{Determine all vertices for which $\distArr$ needs to be computed}
		$\vVertex \gets \uVertex$\;
		$\aStack \gets []$\;
		\While{$\distArr[\vVertex] = \bot$}{\label{algo:lazy_rphast_elim_tree:explore}
			$\aStack \gets \aStack + [\vVertex]$\;
			\If{$\elimtree[\vVertex] = \bot$}{
				\Break\;
			}
			$\vVertex \gets \elimtree[\vVertex]$\;
		}
		\tcp{Compute \distArr for all those vertices}
		\For{\vVertex in \aStack in reverse order}{\label{algo:lazy_rphast_elim_tree:process}
			$\distArr[\vVertex] \gets \distArr^{\uparrow}[\vVertex]$\;
			\For{all edges $\vw \in \echu$}{
				$\distArr[\vVertex] \gets \min(\distArr[\vVertex], \lchd[\vw] + \distArr[\wVertex])$\;
			}
		}
		\Return{$\distArr[\uVertex]$}\;
	}
\end{algorithm}

Note that incremental many-to-one (instead of one-to-many) queries can be handled symmetrically.

\subparagraph{CCH-Potentials.}

In extended routing scenarios, it can be relevant to enable different search variants (e.g., with or without turn cost).
If there are too many variants, it might be infeasible to create one CCH for each variant.
In this case, one can potentially still use a single CCH to speed up related searches.
The core idea is to us the distances in the CCH as potential function for an A* search.

For this, note that if A* searches from \aSource to \aTarget, it requires a potential value for every vertex \vVertex it visits.
This potential value should be lower bound on the distance form \vVertex to the target \aTarget.
The idea now is to query the distance from \vVertex to \aTarget in the CCH and use the result as a lower bound.
For this to work, one has to make sure that the distances in the CCH are actually lower bounds for the desired distances (which is, e.g., the case when the A* search computes distances with turn costs while the CCH does not include turn costs).
Observe that in this scenario, the CCH queries always involve the same target \aTarget and one can thus utilize lazy RPHAST for efficient potential computations.
We call this approach \emph{CCH-Potentials}.
We note that this has been considered before for CHs~\cite{DBLP:journals/jea/StrasserZ22} but not yet for CCHs.

Beyond supporting various extended scenarios with the same CCH, we note that the main benefit of this approach is that adjusting Dijkstra's algorithm to new scenarios is generally much easier than adjusting the CCH itself.

\subsection{Point of Interest Queries}

The next scenario we want to consider are \emph{point-of-interest} (POI) queries.
This queries allow, for example, to find gas stations close to the current location.
We formalize this as the $k$-nearest-neighbor problem.
For this, we are given not only the graph but also a set of targets $\targets \subseteq \vertices$.
Then, for a given source \aSource, the goal is to find the $k$ targets in \targets that are closest \aSource.

For this scenario, Buchhold and Wagner \cite{DBLP:conf/wea/BuchholdW21} introduced a light-weight \emph{selection} phase between the customization phase and the query phase, which can be used to do preprocessing based on \targets.
The more interesting part of the algorithm by Buchhold and Wagner is the adjusted query phase for the $k$-nearest-neighbor setting.
In the following, we first give the core idea of the query and afterwards briefly discuss the selection phase.
We will see that the query fits into the one-to-many setting and can be accelerated with Lazy RPHAST.
Our experiments in Section~\ref{sec:cchpp:eval} show that this results in speedups of up to an order of magnitude.

The core idea of the query by Buchhold and Wagner~\cite{DBLP:conf/wea/BuchholdW21} is based on the separator hierarchy; recall Lemma~\ref{lem:separator-hierarchy-from-elim-tree}.
The separators hierarchy can be viewed as a tree in which each node represents a subgraph called cell.
The cell corresponding to the root is the whole graph and the cell of each non-root node is a subgraph of its parent's cell.
The query now traverses this tree from the root, maintaining a set of the $k$ closest targets found so far.
Before entering a cell \aCell, the query computes a lower bound for the distance between \aSource and vertices in \aCell.
If the resulting distance is larger than the $k$th closest target currently known, we can prune the search as we cannot find a better target in this cell.
Thus, to be able to prune many cells, it makes sense to heuristically first visit cells that contain target vertices close to \aSource.

This query requires multiple shortest path computations all starting at \aSource, which we can improve using Lazy RPHAST.
Moreover, one regularly has to compute the distance of \aSource to all targets in a separator.
To enable this, we need to be able to quickly compute the intersection of a separator with the targets \targets.
This is where the selection phase comes into play.
Buchhold and Wagner~\cite{DBLP:conf/wea/BuchholdW21} propose to just create an array of size $n$ storing for each vertex the information whether it is a target or not.
Then one can get all targets in a separator by iterating the separator once.
As the set of targets \targets is typically small, we propose an alternative approach.
We propose to sort \targets.
Then, for a given separator $S$, we observe that $S$ is a path of vertices with just one child in the elimination tree; see Lemma~\ref{lem:separator-hierarchy-from-elim-tree}.
Thus, the indices of vertices in $S$ are consecutive and we can find $S \cap \targets$ by doing a binary search on the sorted targets for the lowest ID in $S$ and then iterating linearly through \targets until the first vertex in \targets that has a higher ID than the highest ID in $S$.

\subsection{Turn Costs and Restrictions}

So far we always assumed that a graph is given and did not consider what vertices and edges in this graph actually represent.
For road networks a natural assumption is that edges represent road segments and vertices represent crossing.
However, this model ignores turn costs and turn restrictions, which are critical for realistic routing.
In the literature two models have been considered to represent turn costs, but it turns out that for CCH the only competitive solution is to model turn costs into the graph~\cite{DBLP:conf/atmos/BuchholdWZZ20}.
We call graphs with modeled turn costs expanded graphs.

On expanded graphs the CCH algorithms can be applied without any modification.
Unfortunately, this results in a surprisingly large slowdown for both preprocessing steps.
The slowdown is much larger than the factor by which the graph size increased.
Buchhold et al.~\cite{DBLP:conf/atmos/BuchholdWZZ20} investigated the reason for this and proposed various changes to mitigate this.
On one hand they created a variant of Inertial Flow Cutter tailored to compute rank orders for expanded graphs more efficiently.
On the other hand they introduced additional steps to the metric independent preprocessing that reduce the number of edges in the augmented graph.
In combination their result showed that the slowdown of the preprocessing can be reduced to an factor roughly matching the expansion factor without loosing query efficiency.
For more details, we refer to~\cite{DBLP:conf/atmos/BuchholdWZZ20}.

\section{Evaluation}\label{sec:cchpp:eval}
In this section we present the experimental evaluation of our reference implementation.
After describing our setup, we start in Section~\ref{eval:performance} with a comparison of our implementation with two previous implementations~\cite{DBLP:journals/jea/DibbeltSW16,DBLP:journals/jea/Buchhold0W19}.
For this we consider all three phases individually and especially evaluate the impact of our proposed changes.
In Section~\ref{eval:extended} we take a closer look at the extended scenarios.
First of, we evaluate the laze RPHAST algorithm on its own and afterwards its impact on the point of interest queries.
In the last section~\ref{eval:related} we compare our reference implementation of CCHs with other routing algorithms.
For this we focus on the main competition CRP~\cite{DBLP:journals/transci/DellingGPW17} but also consider the non-customizable predecessor, i.e., CH~\cite{DBLP:journals/transci/GeisbergerSSV12}.

\subparagraph{Environment.}
Our primary benchmark machine runs openSUSE Leap 15.3 (kernel 5.3.18), and has 64\,GiB of DDR3-1600 RAM and two Intel Xeon E5-2670 CPUs, each of which has eight cores clocked at 2.6\,Ghz and 8~$\times$~64\,KiB of L1, 8~$\times$~256\,KiB of L2, and 20\,MiB of shared L3 cache.
We use this machine to ensure the comparability of our results to previous works~\cite{DBLP:journals/jea/DibbeltSW16,DBLP:journals/jea/Buchhold0W19} which were also evaluated on this computer.
Additionally, this machine allows a rough comparison with the results on CRP presented in~\cite{DBLP:journals/transci/DellingGPW17}\footnote{
	According to the comparison methodology from~\cite{DBLP:series/lncs/BastDGMPSWW16} (see \url{https://i11www.iti.kit.edu/~pajor/survey/}), the machine used in~\cite{DBLP:journals/transci/DellingGPW17} (SPA-2) is about 20\% slower than ours (compute11).
	We reran the benchmark for our machine and obtained a score of 38\,914\,ms, which is somewhat slower than the previously reported 36\,582\,ms.
	This is likely due to the mitigations for side-channel attacks utilizing speculative execution such as Meltdown and Spectre.
	SPA-2 would only be about 12\% slower than our machine compared to this updated score.
	Generally, these scaling factors have to be interpreted very carefully.
	They are obtained from one-to-all Dijkstra searches on continental-sized road networks.
	This benchmark heavily emphasizes memory bandwidth while neglecting other critical factors such as CPU frequency, cache size and cache speed.
	Also, note that the authors of~\cite{DBLP:journals/transci/DellingGPW17} even used our machine to evaluate their turn-aware CH implementation, stating that it achieves ``comparable'' performance.
	We conclude that our machine and SPA-2 yield running times in the same order of magnitude and that our machine is probably, in most cases, slightly faster.
}.
Since the machine is already almost a decade old at the time of writing, we expect our results to be a reasonably conservative approximation of the performance that can be expected when deploying these algorithms today.

We implemented our algorithms in Rust\footnote{
	The code for this paper and all experiments is available at \url{https://github.com/kit-algo/cchpp}
} and compiled them with \texttt{rustc 1.64.0-nightly (830880640 2022-06-28)} in the release profile with the \texttt{target-cpu=native} option.
For the computation of rank orders, we use InertialFlowCutter (IFC)\footnote{\url{https://github.com/kit-algo/InertialFlowCutter}}~\cite{DBLP:journals/algorithms/GottesburenHUW19}.
As shown in the extensive evaluation in~\cite{DBLP:journals/algorithms/GottesburenHUW19}, IFC currently achieves the best rank order quality while taking only about two times as long as InertialFlow~\cite{DBLP:conf/wea/SchildS15}, the fastest approach with good quality.
We extend IFC with the computation of cut-based orders and the reordering of separator vertices.
These extensions were published as a pull request on GitHub\footnote{\url{https://github.com/kit-algo/InertialFlowCutter/pull/6}} and have since been merged into the project.
InertialFlowCutter is written in C++ and were compiled with GCC~10.3.0 using optimization level 3.

\subparagraph{Inputs.}
We use our main benchmark instances for time-independent route planning: DIMACS Europe and OSM Germany.
We use synthetic turn costs of 100\,s for U-Turns and free turns otherwise for DIMACS Europe as suggested in~\cite{DBLP:journals/transci/DellingGPW17}.
For OSM Germany, we extract real-world turn restrictions from the OSM data.
Additionally, we include a city-scale network of the Stuttgart region provided by PTV.
This network includes real-world production-grade turn cost and restriction data.
It has roughly 110\,k vertices and 252\,k edges, which makes it about two orders of magnitude smaller than the Germany and Europe instances.
Our evaluation in~\cite{DBLP:conf/atmos/BuchholdWZZ20} included additional city-scale networks.

\subparagraph{Methodology.}
For the computation of the rank order, we perform the partitioning ten times and report the average running time.
Contraction and customization running times are obtained as averages over 100 runs.
For the queries, we perform 1\,000\,000 point-to-point queries where both source and target are vertices drawn uniformly at random.
We utilize parallelization for all phases except the queries.

\subsection{CCH Performance}\label{eval:performance}
In this section we want to compare our improved CCH implementation with previous implementation from Dibbelt et al.~\cite{DBLP:journals/jea/DibbeltSW16} and Buchhold et al.~\cite{DBLP:journals/jea/Buchhold0W19}.
For this we analyze three phases individually in the next three subsections.
We will also analyze some parts of each phase in greater detail to show the effects of our contributions.

\subsubsection{Metric-Independent Preprocessing}
In this subsection we analyze the metric-independent preprocessing with a special focus on the contraction algorithm.
Table~\ref{tab:cchpp:preprocessing} shows running times of the metric-independent preprocessing.
The ordering, performed by IFC, is two orders of magnitude slower than the contraction (with chordal completion) and thus dominates the running time.
Still, computing the rank order of a continental-sized network takes only about six minutes, thanks to the efficient parallelization of IFC.
This is faster than the preprocessing time of a classical non-customizable CH~\cite{DBLP:journals/transci/GeisbergerSSV12}\footnote{
	The CH preprocessing is typically performed sequentially, which makes this a somewhat unfair comparison.
	Parallelization approaches for CH preprocessing have been described in the context of time-dependent CH~\cite{vetter2009parallel,DBLP:journals/jea/BatzGSV13}.
	However, these approaches do not scale very well in the classical setting.
	Further, to the best of our knowledge, there is neither a publication on parallelized CH preprocessing in the time-independent case nor any open source implementation.
}.
We only have running times of competing contraction algorithms for Europe\footnote{
	Note that the orders used in~\cite{DBLP:journals/jea/DibbeltSW16} were obtained with KaHiP, which finds slightly worse orders than IFC.
	However, according to~\cite{DBLP:journals/algorithms/GottesburenHUW19}, the advantage for our algorithms from this should be at most 10\%.
}.
Still, the speedups are so significant that we can safely conclude that the chordal completion algorithm is the best approach for the contraction.
Chordal completion is two orders of magnitude faster than the naive baseline, one order of magnitude faster than the engineered contraction graph approach, and also much simpler to implement\footnote{See \url{https://github.com/RoutingKit/RoutingKit/pull/75/commits/16de474b2c3} where we replace the contraction graph approach in RoutingKit with the chordal completion algorithm.}.
On Stuttgart, the running time is in the single-digit milliseconds.
In practice, the running time of the chordal completion is so fast that it disappears behind memory allocation for the graph data, reconstruction of the separator decomposition and other setup/management operations.

\begin{table}
	\centering
	\caption{
		Running time in seconds of the metric-independent preprocessing algorithms.
		Our rank orders were computed with IFC.
		For contraction, we compare our own chordal completion algorithm against the contraction graph approach and the naive baseline as reported in~\cite{DBLP:journals/jea/DibbeltSW16}.
		Our results and the numbers from~\cite{DBLP:journals/jea/DibbeltSW16} were obtained on the same machine.
		The total running time includes the ordering, the contraction with our chordal completion approach and additionally the reconstruction of the elimination tree and separator decomposition, modifying the order into an DFS post order and all other setup operations.
	}\label{tab:cchpp:preprocessing}
	\begin{tabular}{lrrrrr}
\toprule
{} &  Ordering & \multicolumn{3}{c}{Contraction} & Total \\ \cmidrule(lr){3-5}
{} &  {} &  Chordal &  Contraction & Dyn. adjacency & \\
{} &  {} &  completion & graph \cite{DBLP:journals/jea/DibbeltSW16} & array \cite{DBLP:journals/jea/DibbeltSW16} & \\
\midrule
Stuttgart &                          0.8 &                          0.0 &          -- &           -- &             0.9 \\
Germany   &                        203.9 &                          1.3 &          -- &           -- &           222.9 \\
Europe    &                        341.2 &                          1.6 &        15.5 &        305.8 &           361.1 \\
\bottomrule
\end{tabular}

\end{table}

\subsubsection{Customization}

We now evaluate the performance of the customization, with a focus on our improvements to the parallelization; see Table~\ref{tab:cchpp:customization}.
Considering the total running time, we observe that our batched triangle relaxation-based customization in the sequential case is roughly two times faster than the results observed in~\cite{DBLP:journals/jea/Buchhold0W19} and about four times faster than the baseline~\cite{DBLP:journals/jea/DibbeltSW16}.
With full parallelization with 16 threads, the picture remains similar on Europe.
However, on the smaller Stuttgart instance, the difference to the approach from~\cite{DBLP:journals/jea/Buchhold0W19} becomes much smaller.
Interestingly, the performance differences are fueled primarily by the construction step, where our approach is sequentially four to five times faster than~\cite{DBLP:journals/jea/Buchhold0W19}.
One important reason is that the batched triangle relaxation allows us to record unpacking information during the basic customization without any synchronization issues.
Thus, in contrast to~\cite{DBLP:journals/jea/Buchhold0W19}, we do not need to enumerate lower triangles while constructing the minimal augmented graph.
We observe that our algorithms utilize additional threads reasonably well, as long as the instances are sufficiently large.
On Europe, the total speedup with eight threads is 5.8 and 8.6 with 16 threads.
In contrast, on Stuttgart, our best achieved speedup is 4.2 with eight threads.
Adding more threads even starts to degrade the performance gradually.
In terms of absolute numbers, our approach enables customization of continental-sized instances in about 10\,s sequentially and a little more than a second fully parallelized.
These improvements finally bring CCH customization running times into a similar range to CRP customization running time (10.55\,s sequentially, 1.05\,s with 12 cores~\cite{DBLP:journals/transci/DellingGPW17}).
CRP customization times can even be outperformed when only applying the basic customization.

\begin{table}
	\centering
	\setlength{\tabcolsep}{4pt}
	\caption{
		Running times by number of threads of different steps of the customization on Stuttgart and Europe in comparison with the baseline results reported in~\cite{DBLP:journals/jea/DibbeltSW16} and the improvements proposed in~\cite{DBLP:journals/jea/Buchhold0W19}.
		Our results and the numbers from~\cite{DBLP:journals/jea/DibbeltSW16,DBLP:journals/jea/Buchhold0W19} were obtained on the same machine.
		Note that the orders used in~\cite{DBLP:journals/jea/DibbeltSW16,DBLP:journals/jea/Buchhold0W19} were obtained with KaHiP and InertialFlow, respectively which find slightly worse orders than IFC.
		However, according to~\cite{DBLP:journals/algorithms/GottesburenHUW19}, the advantage for our algorithms from this should be at most 10\%.
	}\label{tab:cchpp:customization}
	\begin{tabular}{llrrrrrrrr}
\toprule
& & \multicolumn{4}{c}{Stuttgart [ms]} & \multicolumn{4}{c}{Europe [s]} \\ \cmidrule(l{3pt}r{3pt}){3-6} \cmidrule(l{3pt}r{3pt}){7-10}
Impl & Threads &     Basic & Perfect & Construct &  Total &  Basic & Perfect & Construct &  Total \\
\midrule
\cite{DBLP:journals/jea/DibbeltSW16}
& 1           &           &         &           &        &  10.88 &   22.02 & $\approx$ 9.39 & $\approx$ 42.35 \\
& 16          &           &         &           &        & \multicolumn{2}{c}{5.47} & $\approx$ 9.39 & $\approx$ 14.86 \\
\addlinespace
\cite{DBLP:journals/jea/Buchhold0W19}
& 1           &     20.51 &   20.77 &     48.64 &  89.93 &   5.60 &    6.48 &      9.39 &  21.47 \\
& 16          &      4.91 &    4.41 &      4.35 &  13.66 &   1.11 &    0.63 &      0.80 &   2.54 \\
\addlinespace
{[ours]}
& 1           &     19.85 &   18.95 &     10.33 &  49.13 &   4.09 &    4.72 &      1.95 &  10.76 \\
& 2           &     11.34 &    9.90 &      5.69 &  26.93 &   2.05 &    2.42 &      0.99 &   5.46 \\
& 4           &      7.39 &    5.65 &      3.57 &  16.61 &   1.22 &    1.26 &      0.54 &   3.02 \\
& 8           &      5.70 &    3.51 &      2.46 &  11.67 &   0.81 &    0.69 &      0.35 &   1.86 \\
& 16          &      6.25 &    3.10 &      2.59 &  11.94 &   0.58 &    0.37 &      0.30 &   1.25 \\
\bottomrule
\end{tabular}

\end{table}

\subsubsection{Queries}

Here, we analyze the query search space statistics for different networks and weight functions.
Buchhold et al. already showed that elimination tree queries with all optimizations proposed by them outperform Dijkstra-based CCH queries across all query distances~\cite{DBLP:journals/jea/Buchhold0W19}.
Thus, we focus on elimination tree queries and do not evaluate Dijkstra-based queries.
Table~\ref{tab:cchpp:queries} depicts our results, and shows search space statistics and running times for various scenarios.

We observe that with only the basic customization, the number of relaxed edges and the distance query running times are very robust against different weight functions.
The only changes are due to the pruning criterion~\cite{DBLP:journals/jea/Buchhold0W19} which sometimes allows skipping the relaxation of some edges.
However, as we use random, i.e., primarily long-range queries for this experiment, this happens only seldomly and has little influence on the results.
Despite Germany being a slightly smaller graph, queries take somewhat longer (around 440\,$\mu$s compared to around 300\,$\mu$s on Europe).
This is because the greater modelling detail of the OSM-based Germany instance corresponds to the number of vertices on the obtained paths and the search space size.
Path unpacking times also correlate with the number of vertices on the result paths, are in the same order of magnitude as the distance computation times and are often slightly faster.

Running the perfect customization improves query performance significantly in all aspects.
The number of relaxed edges is at least halved and the query running times are also roughly halved.
Path unpacking times also improve, but not as much.
This is because, for path unpacking, the work to perform remains the same, and only the cache locality improves.
These improvements are more significant for weight functions with a more pronounced hierarchy.
The different weight functions on Germany illustrate this clearly.
Stuttgart behaves differently because it is a city network that is much less hierarchical.

\begin{table}
	\centering
	\setlength{\tabcolsep}{3pt}
	\caption{
		Search space statistics and running times for elimination tree queries on different graphs and weight functions.
		We evaluate queries on \augGraph with the basic customization and on \minAugGraph after performing the perfect customization.
		The number of visited vertices remains the same because elimination tree queries always traverse the full path to the root.
		The number of edges indicates the combined number of edges relaxed in both directions.
		The final column contains the number of vertices on the unpacked shortest path.
		All numbers are averages over 1\,000\,000 random queries.
	}\label{tab:cchpp:queries}
	\begin{tabular}{llrrrrrrrr}
\toprule
& {}      & \multicolumn{3}{c}{Search space}     & \multicolumn{4}{c}{Running time [$\mu$s]}               & \# Path    \\ \cmidrule(l{3pt}r{3pt}){3-5} \cmidrule(l{3pt}r{3pt}){6-9}
& {} & \multirow{2}{*}{Vertices} & \multicolumn{2}{c}{Edges} & \multicolumn{2}{c}{Distance} & \multicolumn{2}{c}{Path} & Vertices \\ \cmidrule(l{3pt}r{3pt}){4-5} \cmidrule(l{3pt}r{3pt}){6-7} \cmidrule(l{3pt}r{3pt}){8-9}
& & &                  \augGraph &      \minAugGraph &           \augGraph &   \minAugGraph &     \augGraph &   \minAugGraph &         \\
\midrule
Stuttgart & Travel time &                    216.4 &            8\,846.5 &    3\,837.1 &            22.9 &   14.5 &       5.6 &    5.4 &             185.9 \\
       & Geo distance &                    216.4 &            8\,888.6 &    3\,303.3 &            22.2 &   13.0 &       4.4 &    4.2 &             149.8 \\
\addlinespace
Germany & Travel time &                   1\,277.9 &          274\,113.1 &   78\,474.2 &           442.0 &  163.2 &     234.6 &  134.4 &            4\,681.0 \\
       & Heavy traffic &                   1\,277.9 &          274\,478.3 &   85\,298.9 &           435.0 &  174.2 &     241.7 &  179.5 &            5\,363.4 \\
       & Geo distance &                   1\,277.9 &          274\,788.5 &  131\,985.4 &           438.3 &  246.3 &     383.4 &  336.2 &            6\,174.7 \\
\addlinespace
Europe & Travel time &                   1\,041.2 &          186\,006.5 &   69\,312.9 &           300.2 &  137.5 &      95.9 &   71.3 &            1\,389.8 \\
       & Geo distance &                   1\,041.2 &          185\,701.3 &   92\,616.8 &           303.3 &  177.9 &     281.3 &  252.9 &            3\,158.9 \\
\bottomrule
\end{tabular}

\end{table}

\subparagraph{Unpacking.}
A look at table~\ref{tab:cchpp:queries} shows that we report significantly faster unpacking times compared to the numbers reported in~\cite{DBLP:journals/jea/DibbeltSW16}.
There, an unpacking time of 253\,$\mu$s is reported for Europe with travel times and 524\,$\mu$s with geo distances (without perfect customization).
Our unpacking times are faster because we store explicit edge unpacking information with each edge while the implementation of~\cite{DBLP:journals/jea/DibbeltSW16} enumerates lower triangles for unpacking.
However, maintaining this unpacking information during the customization costs some performance, specifically in the \minAugGraph construction step.
Thus, there is a trade-off between path unpacking times and customization times.
If path unpacking times are not critical and a slowdown of a factor of two is tolerable, the customization times could be accelerated further.

\subsection{Extended Scenarios}
\label{eval:extended}

In the following, we investigate how CCH performs in the extended problem settings.
For this we first investigate the lazy RPHAST algorithm -- our incremental one-to-many algorithm -- on its own.
Afterwards, we take a look at the point-of-interest queries which are based on lazy RPHAST.

\subsubsection{Lazy RPHAST}

To evaluate our elimination tree-based Lazy RPHAST implementation we do the following.
First, we pick 100 centre vertices uniformly at random.
From each of these, we obtain a ball of vertices \aBall by running Dijkstra's algorithm until the desired number of vertices have been settled.
We then pick 100 source and $2^{14}$ target vertices uniformly at random from \aBall and incrementally compute the distances to all targets from each source.
Figure~\ref{fig:et_vs_dfs} depicts the average running times of the incrementally queried distances on Europe.

Lazy RPHAST on CCH is somewhat slower than on CH.
This is because CCH has a denser search space than CH.
On large ball sizes, the slowdown goes up to a factor of 1.5.
However, on smaller ball sizes, the slowdown becomes less significant.
The elimination tree-based variant even becomes marginally faster than the CH variant on the two smallest ball sizes.
On CCH, we observe a minor advantage for the elimination tree-based realization due to the more efficient implementation.
However, there is one exception where the elimination tree variant is less efficient:
When \aBall is sufficiently large, and many targets are queried, the DFS-based variant becomes marginally faster.
This is because the path to the elimination tree root might contain vertices which are not reachable in the CCH search space because some edges were removed during the perfect customization.
We confirmed this experimentally by running the algorithms without perfect customization on \augGraph.
In this case, the DFS traverses the same vertices as the elimination tree-based variant, and the effect disappears.
Usually, however, the number of vertices unnecessarily visited by the elimination tree variant is small compared to the total work.
It is thus more than compensated by the more efficient implementation.
We, therefore, conclude that the elimination tree-based Lazy RPHAST variant is the better default choice on CCH.
Regarding absolute numbers, the average time per distance query to compute starts in a similar range (around 0.1\,ms) and quickly becomes faster as more distances are queried.
When computing more than $2^{10}$ distance from the same source on a small ball, the average time per distance even becomes faster than Hub Label queries (roughly 0.5\,$\mu$s).
We conclude that applying the many-to-one variant of Lazy RPHAST on CCH allows us to carry over all the results from Strasser et al. \cite{DBLP:conf/wea/StrasserZ21} to a customizable setting at the cost of just a minor slowdown.

\begin{figure}
	\centering
	\includegraphics[width=\linewidth]{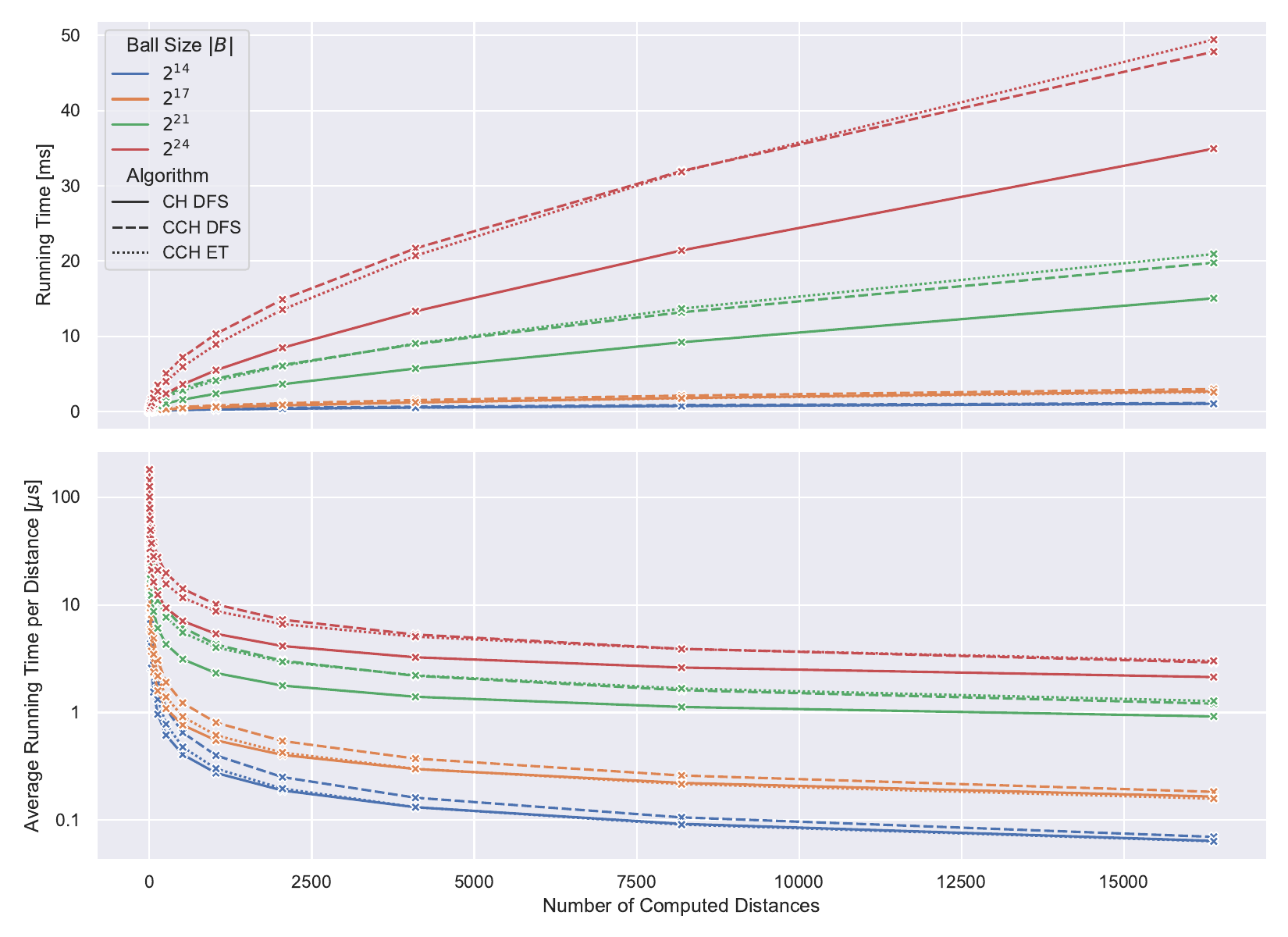}
	\caption{
		Average running times of Lazy RPHAST on CH and CCH while incrementally querying $|T| = 2^{14}$ targets from a ball of varying size $|\aBall|$ on Europe, excluding selection times.
		The upper figure contains the total elapsed running time.
		The lower figure contains the averaged running time per source, i.e., $y/x$ from the upper figure.
		Note the different y-axis scales and units.
	}\label{fig:et_vs_dfs}
\end{figure}

\subsubsection{Point-of-Interest Queries}

We now evaluate CCH-based algorithms to answer POI queries efficiently.
Figure~\ref{fig:knn} depicts mean running times for finding the four closest targets from a set of varying size with the different algorithms.
We drew $100$ target sets uniformly at random, performed the selection, and executed nearest neighbor queries from $100$ source vertices drawn uniformly at random for each target set.
There, we also present a direct comparison between the original separator-based algorithm and our optimized version.
Here, we focus on evaluating which CCH-based algorithm is most suitable in different scenarios.

We observe that the performance of the baseline, Dijkstra's algorithm, strongly depends on the number of POIs.
Since targets are drawn uniformly at random, having more targets corresponds to the closest targets being much closer.
With $2^{11}$ or more targets, Dijkstra's algorithm achieves interactive query times and may present a feasible, practical option.
The selection phase for Dijkstra's algorithm only consists of marking the targets so it can be quickly determined at query time if a settled vertex is a target.
Thus, it is extremely fast.
Lazy RPHAST, in contrast, does not have a selection phase and computes distances to all targets without any stopping criterion.
Thus, it is fast for small target sets.
The bucket query approach~\cite{DBLP:conf/alenex/KnoppSSSW07} (here called BCCH) has the fastest queries (well below 100\,$\mu$s) across all target set sizes.
Further, query times profit from larger target sets.
However, the selection phase is relatively slow and goes into the minutes for large target sets.
Thus, the usefulness of BCCH depends on the scenario.
It is the best choice in an offline scenario where targets are known in advance.
In an online scenario where the targets are part of the query (for example, a user types ``burger restaurants'' in the search field of his maps application), BCCH is not feasible.
Finally, the separator-based algorithm presents a very robust trade-off.
Selection times (consisting of sorting the targets by CCH rank) are barely slower than the selection for Dijkstra's algorithm.
Query times are consistently below 1\,ms, which is more than sufficient for typical map applications.
Therefore, the separator-based algorithm is the strongest contender in online scenarios.
Even though these query times are an order of magnitude slower than BCCH, the absolute times are so fast that the algorithm is also a reasonable choice in offline scenarios.
Note that our optimizations are crucial in making the separator-based algorithm competitive for the online scenario.
Without them, the selection takes 10-20\,ms~\cite{DBLP:conf/wea/BuchholdW21}.
Thus, the separator-based algorithm would be dominated by Lazy RPHAST for few targets and Dijkstra's algorithm for many targets.

\begin{figure}
	\centering
	\includegraphics[width=\linewidth]{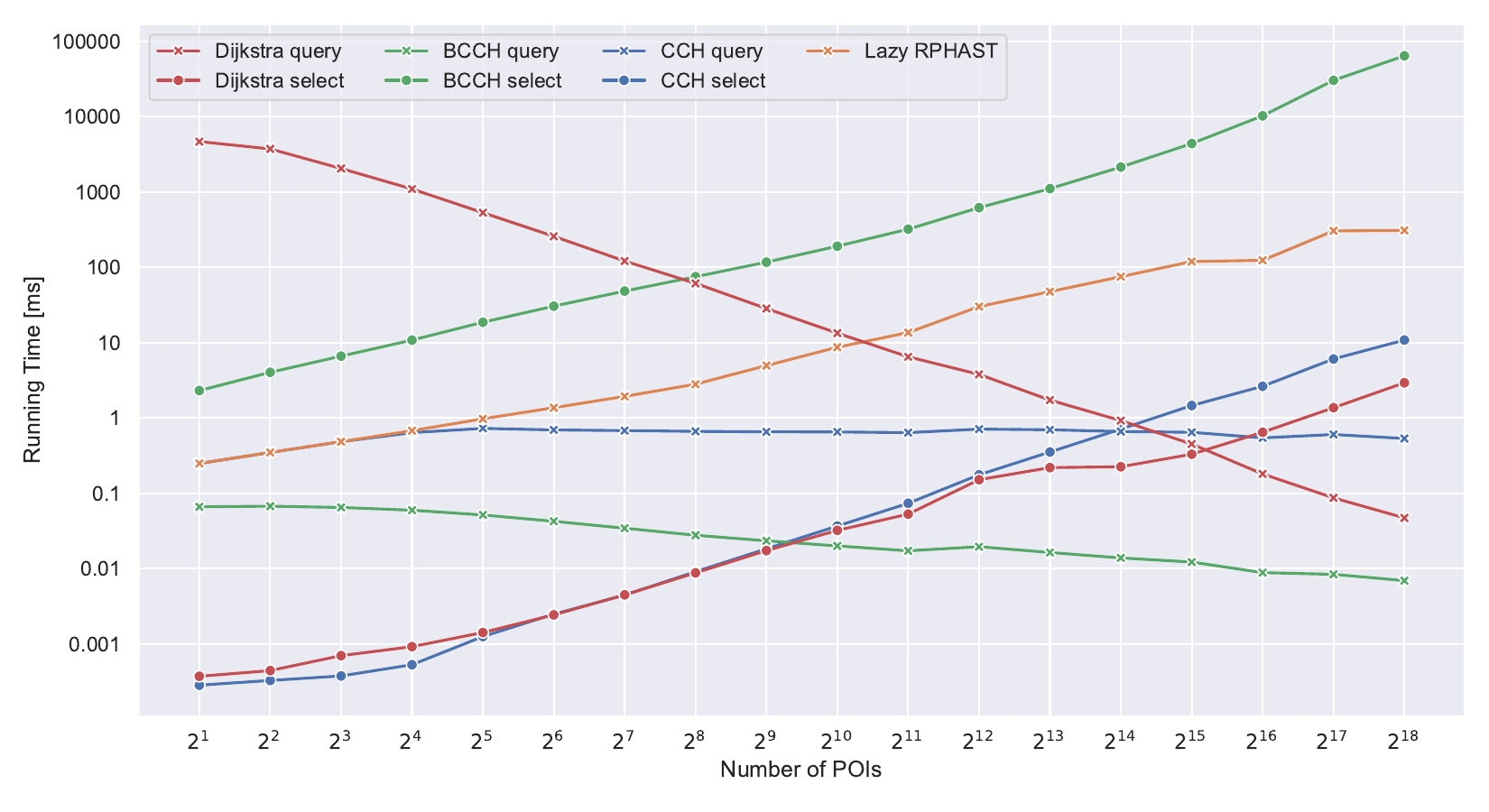}
	\caption{
		Average running times for different nearest-neighbor algorithms on Europe to find the $k=4$ closest targets from a POI set of varying size.
	}\label{fig:knn}
\end{figure}

\subsection{Related Approaches}\label{eval:related}
In this section we compare the performance of our reference CCH implementation with related routing algorithms.
Table~\ref{tab:cchpp:related_work} shows our results on the most prominent research benchmark instance, the DIMACS Europe graph.
Dijkstra's algorithm, the non-accelerated baseline, requires no preprocessing but has prohibitively slow query running times.
CH~\cite{DBLP:journals/transci/GeisbergerSSV12}, the non-customizable predecessor and foundation to CCH, achieves fast queries on travel times.
In this case, the (parallelized) preprocessing is 3.4 times faster than the CCH preprocessing, and queries are also marginally faster.
However, CH is not robust against weight functions with a weaker hierarchy.
Both preprocessing and query times degrade significantly when applied to a geo distance weight function.
In contrast, CRP~\cite{DBLP:journals/transci/DellingGPW17} is very robust against different weight functions and can introduce arbitrary weight functions with a customization taking only around a second.
However, queries are somewhat slower than CH queries.
With CCH, we achieve robustness against different weight functions while retaining most of the CH query performance.
The optimizations proposed in this work and~\cite{DBLP:journals/jea/Buchhold0W19} accelerate the CCH customization such that the basic customization is slightly faster than the CRP customization.
Queries with only the basic customization are around five times faster than CRP queries.
For even faster queries, the perfect customization can be applied.
Then, the total customization time is marginally slower than CRP, but queries are an order of magnitude faster and roughly as fast as classical CH queries.
Nevertheless, CRP still has some advantages over CCH.
First, the space required per customized weight function is smaller for CRP.
Second, CRP has been augmented to more extended problem settings and so far appeared more flexible.

\begin{table}
	\centering
	\caption{
		Running times for Dijkstra's algorithm, CH, CRP and CCH on Europe with different weight functions.
		Preprocessing and customization were executed in parallel, queries sequentially.
		For CH and CRP we list unscaled results as reported in \cite{DBLP:journals/transci/DellingGPW17}.
	}\label{tab:cchpp:related_work}
	\begin{tabular}{lrrrrrrrr}
		\toprule
                                                  & \multicolumn{3}{c}{Travel time} & \multicolumn{3}{c}{Geo distance}                                 \\
		\cmidrule(lr){2-4}                                             \cmidrule(lr){5-7}
                                                  & Prepro.                         & Custom. & Queries   & Prepro.              & Custom. & Queries   \\
                                                  & [s]                             & [s]     & [ms]      & [s]                  & [s]     & [ms]      \\
		\midrule
		Dijkstra                                      & --                              & --      & 2\,359.14 & --                   & --      & 1\,972.93 \\
		CH \cite{DBLP:journals/transci/DellingGPW17}  & 109                             & --      & 0.11      & 726                  & --      & 0.87      \\
		CRP \cite{DBLP:journals/transci/DellingGPW17} & 654                             & 1.05    & 1.65      & 654                  & 1.04    & 1.91      \\
		\addlinespace
		CCH \augGraph                                 & \multirow{2}{*}{367}            & 0.58    & 0.30      & \multirow{2}{*}{367} & 0.58    & 0.30      \\
		CCH \minAugGraph                              &                                 & 1.25    & 0.14      &                      & 1.25    & 0.18      \\
		\bottomrule
	\end{tabular}
\end{table}

\section{Conclusion}
\label{sec:cchpp:conclusion}

With this work, we given an accessible introduction into Customizable Contraction Hierarchies and review the state of the art.
Our algorithmic contributions include a novel contraction algorithm, simplification and improvements of customization algorithms, the batched triangle relaxation as an alternative customization method, an elimination tree-based Lazy RPHAST variant, some improvements for the separator-based CCH nearest neighbor search.
Further, we provide an extensive experimental evaluation which demonstrates that CCH can be used to build a comprehensive and competitive routing framework.

\bibliography{References}

\begin{thebibliography}{10}

\bibitem{DBLP:series/lncs/BastDGMPSWW16}
Hannah Bast, Daniel Delling, Andrew~V. Goldberg, Matthias
  M{\"{u}}ller{-}Hannemann, Thomas Pajor, Peter Sanders, Dorothea Wagner, and
  Renato~F. Werneck.
\newblock Route planning in transportation networks.
\newblock In Lasse Kliemann and Peter Sanders, editors, {\em Algorithm
  Engineering - Selected Results and Surveys}, volume 9220 of {\em Lecture
  Notes in Computer Science}, pages 19--80. Springer, 2016.
\newblock \href {https://doi.org/10.1007/978-3-319-49487-6_2}
  {\path{doi:10.1007/978-3-319-49487-6_2}}.

\bibitem{DBLP:journals/jea/BatzGSV13}
Gernot~Veit Batz, Robert Geisberger, Peter Sanders, and Christian Vetter.
\newblock Minimum time-dependent travel times with contraction hierarchies.
\newblock {\em {ACM} Journal of Experimental Algorithmics}, 18, 2013.
\newblock \href {https://doi.org/10.1145/2444016.2444020}
  {\path{doi:10.1145/2444016.2444020}}.

\bibitem{DBLP:journals/tcs/BauerCRW16}
Reinhard Bauer, Tobias Columbus, Ignaz Rutter, and Dorothea Wagner.
\newblock Search-space size in contraction hierarchies.
\newblock {\em Theoretical Computer Science}, 645:112--127, 2016.
\newblock \href {https://doi.org/10.1016/j.tcs.2016.07.003}
  {\path{doi:10.1016/j.tcs.2016.07.003}}.

\bibitem{DBLP:journals/jea/Buchhold0W19}
Valentin Buchhold, Peter Sanders, and Dorothea Wagner.
\newblock Real-time traffic assignment using engineered customizable
  contraction hierarchies.
\newblock {\em {ACM} Journal of Experimental Algorithmics},
  24(1):2.4:1--2.4:28, 2019.
\newblock \href {https://doi.org/10.1145/3362693} {\path{doi:10.1145/3362693}}.

\bibitem{DBLP:conf/wea/BuchholdW21}
Valentin Buchhold and Dorothea Wagner.
\newblock Nearest-neighbor queries in customizable contraction hierarchies and
  applications.
\newblock In David Coudert and Emanuele Natale, editors, {\em 19th
  International Symposium on Experimental Algorithms, {SEA} 2021, June 7-9,
  2021, Nice, France}, volume 190 of {\em LIPIcs}, pages 18:1--18:18. Schloss
  Dagstuhl - Leibniz-Zentrum f{\"{u}}r Informatik, 2021.
\newblock \href {https://doi.org/10.4230/LIPIcs.SEA.2021.18}
  {\path{doi:10.4230/LIPIcs.SEA.2021.18}}.

\bibitem{DBLP:conf/atmos/BuchholdWZZ20}
Valentin Buchhold, Dorothea Wagner, Tim Zeitz, and Michael Z{\"{u}}ndorf.
\newblock Customizable contraction hierarchies with turn costs.
\newblock In Dennis Huisman and Christos~D. Zaroliagis, editors, {\em 20th
  Symposium on Algorithmic Approaches for Transportation Modelling,
  Optimization, and Systems, {ATMOS} 2020, September 7-8, 2020, Pisa, Italy
  (Virtual Conference)}, volume~85 of {\em OASIcs}, pages 9:1--9:15. Schloss
  Dagstuhl - Leibniz-Zentrum f{\"{u}}r Informatik, 2020.
\newblock \href {https://doi.org/10.4230/OASIcs.ATMOS.2020.9}
  {\path{doi:10.4230/OASIcs.ATMOS.2020.9}}.

\bibitem{DBLP:journals/siamcomp/ChibaN85}
Norishige Chiba and Takao Nishizeki.
\newblock Arboricity and subgraph listing algorithms.
\newblock {\em {SIAM} J. Comput.}, 14(1):210--223, 1985.
\newblock \href {https://doi.org/10.1137/0214017} {\path{doi:10.1137/0214017}}.

\bibitem{DBLP:journals/transci/DellingGPW17}
Daniel Delling, Andrew~V. Goldberg, Thomas Pajor, and Renato~F. Werneck.
\newblock Customizable route planning in road networks.
\newblock {\em Transportation Science}, 51(2):566--591, 2017.
\newblock \href {https://doi.org/10.1287/trsc.2014.0579}
  {\path{doi:10.1287/trsc.2014.0579}}.

\bibitem{DBLP:conf/ipps/DellingGRW11}
Daniel Delling, Andrew~V. Goldberg, Ilya~P. Razenshteyn, and Renato Fonseca~F.
  Werneck.
\newblock Graph partitioning with natural cuts.
\newblock In {\em 25th {IEEE} International Symposium on Parallel and
  Distributed Processing, {IPDPS} 2011, Anchorage, Alaska, USA, 16-20 May, 2011
  - Conference Proceedings}, pages 1135--1146. {IEEE}, 2011.
\newblock \href {https://doi.org/10.1109/IPDPS.2011.108}
  {\path{doi:10.1109/IPDPS.2011.108}}.

\bibitem{DBLP:journals/tkde/DellingW15}
Daniel Delling and Renato~F. Werneck.
\newblock Customizable point-of-interest queries in road networks.
\newblock {\em {IEEE} Transactions on Knowledge and Data Engineering},
  27(3):686--698, 2015.
\newblock \href {https://doi.org/10.1109/TKDE.2014.2345386}
  {\path{doi:10.1109/TKDE.2014.2345386}}.

\bibitem{DBLP:journals/jea/DibbeltSW16}
Julian Dibbelt, Ben Strasser, and Dorothea Wagner.
\newblock Customizable contraction hierarchies.
\newblock {\em {ACM} Journal of Experimental Algorithmics},
  21(1):1.5:1--1.5:49, 2016.
\newblock \href {https://doi.org/10.1145/2886843} {\path{doi:10.1145/2886843}}.

\bibitem{DBLP:journals/tcs/GareyJS76}
M.~R. Garey, David~S. Johnson, and Larry~J. Stockmeyer.
\newblock Some simplified np-complete graph problems.
\newblock {\em Theoretical Computer Science}, 1(3):237--267, 1976.
\newblock \href {https://doi.org/10.1016/0304-3975(76)90059-1}
  {\path{doi:10.1016/0304-3975(76)90059-1}}.

\bibitem{DBLP:journals/transci/GeisbergerSSV12}
Robert Geisberger, Peter Sanders, Dominik Schultes, and Christian Vetter.
\newblock Exact routing in large road networks using contraction hierarchies.
\newblock {\em Transportation Science}, 46(3):388--404, 2012.
\newblock \href {https://doi.org/10.1287/trsc.1110.0401}
  {\path{doi:10.1287/trsc.1110.0401}}.

\bibitem{g-ndrfe-73}
Alan George.
\newblock {Nested Dissection of a Regular Finite Element Mesh}.
\newblock {\em {SIAM} Journal on Numerical Analysis}, 10(2):345--363, 1973.
\newblock \href {https://doi.org/10.1137/0710032} {\path{doi:10.1137/0710032}}.

\bibitem{gl-aqgms-78}
Alan George and Joseph~W. Liu.
\newblock {A Quotient Graph Model for Symmetric Factorization}.
\newblock In {\em Sparse Matrix Proceedings}, pages 154--175. SIAM, 1978.

\bibitem{DBLP:journals/algorithms/GottesburenHUW19}
Lars Gottesb{\"{u}}ren, Michael Hamann, Tim~Niklas Uhl, and Dorothea Wagner.
\newblock Faster and better nested dissection orders for customizable
  contraction hierarchies.
\newblock {\em Algorithms}, 12(9):196, 2019.
\newblock \href {https://doi.org/10.3390/a12090196}
  {\path{doi:10.3390/a12090196}}.

\bibitem{DBLP:journals/tcs/HabibMPV00}
Michel Habib, Ross~M. McConnell, Christophe Paul, and Laurent Viennot.
\newblock Lex-{BFS} and partition refinement, with applications to transitive
  orientation, interval graph recognition and consecutive ones testing.
\newblock {\em Theoretical Computer Science}, 234(1-2):59--84, 2000.
\newblock \href {https://doi.org/10.1016/S0304-3975(97)00241-7}
  {\path{doi:10.1016/S0304-3975(97)00241-7}}.

\bibitem{DBLP:journals/jea/HamannS18}
Michael Hamann and Ben Strasser.
\newblock Graph bisection with pareto optimization.
\newblock {\em {ACM} Journal of Experimental Algorithmics}, 23, 2018.
\newblock \href {https://doi.org/10.1145/3173045} {\path{doi:10.1145/3173045}}.

\bibitem{DBLP:journals/dm/Heggernes06}
Pinar Heggernes.
\newblock Minimal triangulations of graphs: {A} survey.
\newblock {\em Discrete Mathematics}, 306(3):297--317, 2006.
\newblock \href {https://doi.org/10.1016/j.disc.2005.12.003}
  {\path{doi:10.1016/j.disc.2005.12.003}}.

\bibitem{DBLP:conf/alenex/KnoppSSSW07}
Sebastian Knopp, Peter Sanders, Dominik Schultes, Frank Schulz, and Dorothea
  Wagner.
\newblock Computing many-to-many shortest paths using highway hierarchies.
\newblock In {\em Proceedings of the Nine Workshop on Algorithm Engineering and
  Experiments, {ALENEX} 2007, New Orleans, Louisiana, USA, January 6, 2007}.
  {SIAM}, 2007.
\newblock \href {https://doi.org/10.1137/1.9781611972870.4}
  {\path{doi:10.1137/1.9781611972870.4}}.

\bibitem{DBLP:journals/toms/Liu86}
Joseph~W. Liu.
\newblock A compact row storage scheme for cholesky factors using elimination
  trees.
\newblock {\em {ACM} Trans. Math. Softw.}, 12(2):127--148, 1986.
\newblock \href {https://doi.org/10.1145/6497.6499}
  {\path{doi:10.1145/6497.6499}}.

\bibitem{parter1961use}
Seymour Parter.
\newblock The use of linear graphs in gauss elimination.
\newblock {\em {SIAM} review}, 3(2):119--130, 1961.
\newblock \href {https://doi.org/10.1137/1003021} {\path{doi:10.1137/1003021}}.

\bibitem{rose1978algorithmic}
Donald~J Rose and Robert~Endre Tarjan.
\newblock Algorithmic aspects of vertex elimination on directed graphs.
\newblock {\em {SIAM} Journal on Applied Mathematics}, 34(1):176--197, 1978.
\newblock \href {https://doi.org/10.1137/0134014} {\path{doi:10.1137/0134014}}.

\bibitem{DBLP:conf/alenex/SandersS12}
Peter Sanders and Christian Schulz.
\newblock Distributed evolutionary graph partitioning.
\newblock In David~A. Bader and Petra Mutzel, editors, {\em Proceedings of the
  14th Meeting on Algorithm Engineering {\&} Experiments, {ALENEX} 2012, The
  Westin Miyako, Kyoto, Japan, January 16, 2012}, pages 16--29. {SIAM} /
  Omnipress, 2012.
\newblock \href {https://doi.org/10.1137/1.9781611972924.2}
  {\path{doi:10.1137/1.9781611972924.2}}.

\bibitem{DBLP:conf/wea/SandersS13}
Peter Sanders and Christian Schulz.
\newblock Think locally, act globally: Highly balanced graph partitioning.
\newblock In Vincenzo Bonifaci, Camil Demetrescu, and Alberto
  Marchetti{-}Spaccamela, editors, {\em Experimental Algorithms, 12th
  International Symposium, {SEA} 2013, Rome, Italy, June 5-7, 2013.
  Proceedings}, volume 7933 of {\em Lecture Notes in Computer Science}, pages
  164--175. Springer, 2013.
\newblock \href {https://doi.org/10.1007/978-3-642-38527-8_16}
  {\path{doi:10.1007/978-3-642-38527-8_16}}.

\bibitem{DBLP:conf/wea/SchildS15}
Aaron Schild and Christian Sommer.
\newblock On balanced separators in road networks.
\newblock In Evripidis Bampis, editor, {\em Experimental Algorithms - 14th
  International Symposium, {SEA} 2015, Paris, France, June 29 - July 1, 2015,
  Proceedings}, volume 9125 of {\em Lecture Notes in Computer Science}, pages
  286--297. Springer, 2015.
\newblock \href {https://doi.org/10.1007/978-3-319-20086-6_22}
  {\path{doi:10.1007/978-3-319-20086-6_22}}.

\bibitem{DBLP:journals/toms/Schreiber82}
Robert Schreiber.
\newblock A new implementation of sparse gaussian elimination.
\newblock {\em {ACM} Trans. Math. Softw.}, 8(3):256--276, 1982.
\newblock \href {https://doi.org/10.1145/356004.356006}
  {\path{doi:10.1145/356004.356006}}.

\bibitem{DBLP:conf/wea/StrasserZ21}
Ben Strasser and Tim Zeitz.
\newblock A fast and tight heuristic for a* in road networks.
\newblock In David Coudert and Emanuele Natale, editors, {\em 19th
  International Symposium on Experimental Algorithms, {SEA} 2021, June 7-9,
  2021, Nice, France}, volume 190 of {\em LIPIcs}, pages 6:1--6:16. Schloss
  Dagstuhl - Leibniz-Zentrum f{\"{u}}r Informatik, 2021.
\newblock \href {https://doi.org/10.4230/LIPIcs.SEA.2021.6}
  {\path{doi:10.4230/LIPIcs.SEA.2021.6}}.

\bibitem{DBLP:journals/jea/StrasserZ22}
Ben Strasser and Tim Zeitz.
\newblock Using incremental many-to-one queries to build a fast and tight
  heuristic for a* in road networks.
\newblock {\em {ACM} J. Exp. Algorithmics}, 27:1--28, 2022.
\newblock \href {https://doi.org/10.1145/3571282} {\path{doi:10.1145/3571282}}.

\bibitem{vetter2009parallel}
Christian Vetter.
\newblock Parallel time-dependent contraction hierarchies.
\newblock Student research project, Karlsruhe Institute of Technology, 2009.
\newblock URL: \url{https://http://algo2.iti.kit.edu/download/vetter_sa.pdf}.

\bibitem{z-cchtc-19}
Michael Z{\"u}ndorf.
\newblock {Customizable Contraction Hierarchies with Turn Costs}.
\newblock Bachelor thesis, Karlsruhe Institute of Technology, 2019.
\newblock URL:
  \url{https://i11www.iti.kit.edu/_media/teaching/theses/ba-zuendorf-19.pdf}.

\end{thebibliography}
\end{document}